\newtheorem{Remark}{\it Remark}[section]
\newtheorem{Theorem}{\it Theorem}[section]
\newtheorem{Proposition}{\it Proposition}[section]
\newtheorem{Lemma}{\it Lemma}[section]
\begin{document}
%
% paper title
% can use linebreaks \\ within to get better formatting as desired

\title{On the Alternative Relaying Diamond Channel with Conferencing Links}

% author names and affiliations
% use a multiple column layout for up to three different
% affiliations

%\author{\IEEEauthorblockN{Chuan Huang,~Jinhua Jiang,~Shuguang
%Cui} }

\author{\IEEEauthorblockN{Chuan Huang,~\IEEEmembership{Student Member,~IEEE}, Shuguang
Cui},~\IEEEmembership{Member,~IEEE}

\thanks{Chuan Huang and Shuguang Cui are with the Department of Electrical
and Computer Engineering, Texas A\&M University, College Station,
TX, 77843. Emails: \{huangch, cui\}@tamu.edu.}}

% make the title area
\maketitle
\begin{abstract}
%\boldmath
In this paper, the diamond relay channel is considered, which consists of one source-destination pair and two relay nodes connected with rate-limited out-of-band conferencing links. In particular, we focus on the half-duplex alternative relaying strategy, in which the two relays operate alternatively over time. With different amounts of delay, two conferencing strategies are proposed, each of which can be implemented by either a general two-side conferencing scheme (for which both of the two conferencing links are used) or a special-case one-side conferencing scheme (for which only one of the two conferencing links is used). Based on the most general two-side conferencing scheme, we derive the achievable rates by using the decode-and-forward (DF) and amplify-and-forward (AF) relaying schemes, and show that these rate maximization problems are convex. By further exploiting the properties of the optimal solutions, the simpler one-side conferencing is shown to be equally good as the two-side conferencing in term of the achievable rates under arbitrary channel conditions. Based on this, the DF rate in closed-form is obtained, and the principle to use which one of the two conferencing links for one-side conferencing is also established. Moreover, the DF scheme is shown to be capacity-achieving under certain conditions with even one-side conferencing. For the AF relaying scheme, one-side conferencing is shown to be sub-optimal in general. Finally, numerical results are provided to validate our analysis.
\end{abstract}

\begin{IEEEkeywords}
Diamond relay channel, conferencing, decode-and-forward (DF), amplify-and-forward (AF).
\end{IEEEkeywords}

% IEEEtran.cls defaults to using nonbold math in the Abstract.
% This preserves the distinction between vectors and scalars. However,
% if the conference you are submitting to favors bold math in the abstract,
% then you can use LaTeX's standard command \boldmath at the very start
% of the abstract to achieve this. Many IEEE journals/conferences frown on
% math in the abstract anyway.

% no keywords

\section{Introduction}

Cooperative communication with relays is a well accepted concept to enhance system performance beyond-3G wireless technologies such as WiMAX and 3GPP UMTS Long Term Evolution (LTE). From the  information-theoretical viewpoint, the capacity bounds of the conventional three-node relay channel have been well studied \cite{cover,xie,kramer,madsen}, and various achievable schemes, such as decode-and-forward (DF) and compress-and-forward (CF), have been proposed. For the half-duplex relay channel, in \cite{madsen} and
the references therein the authors have studied the achievable rates and the associate power allocation schemes.

For the case with two relay nodes and no direct link between the source and the destination, termed as the diamond relay channel, various achievable rates were studied in \cite{schein,schein_thesis,niesen,xue,bagheri,resaei,chang}. In particular, the authors in \cite{schein,schein_thesis} discussed the capacity upper bound and the achievable rates using the DF and amplify-and-forward (AF) schemes under the full-duplex relaying mode. For the case with $N$ relays, the authors in \cite{niesen} used the bursty AF scheme to achieve the channel capacity within 1.8 bits for arbitrary channel gains and number of relays. Under the half-duplex mode, the authors in \cite{xue} discussed the achievable rates for two time-sharing schemes, i.e., the simultaneous relaying and alternative relaying. In \cite{bagheri}, the authors further discussed the same problem and bounded the gap between the achievable rate and the upper bound to be  within at most some constant number of bits. By further exploring partial collaboration between the two relays, the authors in \cite{resaei,chang} developed some DF schemes based on dirty paper coding (DPC) and block Markov encoding (BME), where the DF scheme is shown to be optimal in some special cases \cite{resaei}.

In practical cellular systems, some nodes might have extra out-of-band connections with the others, e.g., through Bluetooth, WiFi, optical fiber, etc., to exchange certain information and improve the overall system performance. From the information-theoretical viewpoint, such kind of interaction can be modeled as node conferencing \cite{cmac,ron,ron3,maric,Denus}. Specifically, the multiple access channel (MAC) \cite{cmac} with transmitter conferencing, the broadcast channel (BC) \cite{ron,ron3} with receiver conferencing, and the compound MAC (for which the two receivers try to decode both of the two source messages) with either transmitter or receiver conferencing \cite{maric,Denus}, have been studied.

In addition, the conferencing links can be used among relays to enhance the system performance \cite{chuan2}. In \cite{chuan2}, the authors considered the simultaneous relaying diamond channel with conferencing links, for which the two relays transmit and receive simultaneously over the same time and frequency. The achievable rates were studied for newly developed coding schemes based on the conventional DF, CF, and AF relaying schemes, and some capacity results were proved under special channel conditions. In this paper, we extend the idea of relay conferencing in \cite{chuan2} to the case with the alternative relaying strategy, for which only one relay transmits to the destination and the other listens the source in each time slot, and their working modes are exchanged in the consequent time slot. In practice, the alternative relaying strategy is more appealing than the simultaneous relaying strategy, due to the fact that it can achieve full multiplexing gain in the high signal-to-noise ratio (SNR) regime.

In this paper, we consider the following two different conferencing strategies:
\begin{enumerate}
  \item Conferencing strategy I: Relay conferencing for each source message is executed within the subsequent time slot after the relay receives the source signal by partially utilizing the conferencing links, and thus the decoding delay at the destination for each source message will be at most two time slots. By letting both or one of the two relays adopt the above conferencing strategy, we obtain the following two schemes:
      \begin{enumerate}
        \item Two-side conferencing: Use both of the two conferencing links, and both of the two relays are required to conference with each other;
        \item One-side conferencing: Use one of the conferencing links, and one of the relays sends message to the its counterpart, while the other one keeps silent.
      \end{enumerate}

  \item Conferencing strategy II: Relay conferencing for each source message is operated in the subsequent two time slots by fully utilizing the conferencing links, and thus the decoding delay for each source message at the destination will be more than two time slots. Similar to the previous case, we can also introduce both of the two-side and one-side conferencing schemes with this conferencing strategy.
\end{enumerate}
Intuitively, strategy II may lead to larger achievable rates compared to strategy I, since it allows higher conferencing rates between the two relays. Moreover, it is worth noting that one-side conferencing is just a special case of two-side conferencing, by letting one of the relays keep silent, and thus two-side conferencing in general will outperform one-side conferencing for both of the two conferencing strategies. Somewhat surprisingly, it will be shown that under certain conditions one-side conferencing is enough to achieve the same rate as the two-side conferencing, while it is much simpler to be implemented. The main contributions of this paper are summarized as follows: For both of the two conferencing strategies with the general two-side conferencing scheme, we derive the DF and AF achievable rates. For the DF relaying scheme, we formulate the rate maximization problem as a linear programming (LP) problem; for the AF relaying scheme, it is shown that the optimal linear combining problem is convex. By exploiting the properties of the optimal solutions for the above two problems, we further obtain the following results:
\begin{enumerate}
  \item For the DF relaying scheme, it is proved that the one-side conferencing scheme is optimal to achieve the maximum DF rates achieved by the two-side conferencing scheme for both of the two conferencing strategies. Based on this property, we derive the DF rates in closed-form under different channel coefficients, and further determine: (i) when relay conferencing is unnecessary; (ii) when relay conferencing is necessary, and which one of the conferencing links should be used. Moreover, we prove that the DF scheme achieves the capacity upper bound under conferencing strategy I asymptotically as the conferencing link rates go to infinity, while only finite conferencing rates are required under strategy II.
  \item For the AF relaying scheme, it is shown that: (i) When the second-hop relay-to-destination link SNRs become asymptotically large, two-side conferencing is necessary; (ii) when these link SNRs go to zero, one-side conferencing is asymptotically optimal, and each relay only needs to forward the signal with a higher SNR to the destination.
\end{enumerate}

The remainder of the paper is organized as follows. Section II introduces all assumptions and channel models. The capacity upper bound and achievable rates obtained by using the DF and AF relaying schemes are discussed for conferencing strategy I in Section III, and the DF scheme under conferencing strategy II is investigated in Section IV. Section V presents the numerical results. Finally, the paper is concluded in Section VI.

{\it Notation}: $\log(\cdot)$ stands for the base-2 logarithm; $\mathcal{C}(x) = \log \left( 1+ x\right)$ denotes for the additive white Gaussian noise (AWGN) channel capacity; $\min \left\{x,y \right\}$ and $\max \left\{x,y \right\}$ denote the minimum and maximum between two real numbers $x$ and $y$, respectively; $|x|$ denotes the amplitude of a complex number $x$, and $|\bm{A}|$ denote the determinant of a matrix $\bm{A}$; $\mathbb{E} (X)$ denotes the expectation of a random variable $X$.

\section{System Model}

We consider the diamond relay channel, as shown in Fig. \ref{system_model}, which consists of one source node, one destination node, and two relays. Between the two relays, there are two wired conferencing links, which are both rate-limited, i.e., the maximum supporting rate of the conferencing link from relay 1 to relay 2 is $C_{12} \geq 0 $, and $C_{21}$ is defined similarly. Due to the wired conferencing link assumptions, these two conferencing links can be considered orthogonal to each other and also orthogonal to the source-to-relay and relay-to-destination links. It is assumed that there are no direct wireless links between neither the source-destination pair nor the two relays.

In this paper, it is assumed that the transmissions of the source and relays are slotted, and the half-duplex alternative relaying scheme is adopted, as shown in Fig. \ref{alternative_relay}. Specifically, in the odd-numbered time slots, the source sends a message to relay 1, and relay 2 forwards a signal to the destination; in the even-numbered time slots, the roles of the two relays are exchanged. At the relays, the DF and AF relaying schemes are adopted for the transmissions to the destination. For the DF relaying scheme, we may allocate different time fractions to the odd and even time slots: Denote the time fraction allocated to the odd time slots as $\lambda_1$, and that to the even time slots as $\lambda_2$, with $\lambda_1 +\lambda_2 =1$, $\lambda_1 \geq 0$, and $\lambda_2 \geq 0$. Note that among odd time slots or among over time slots, they are of equal length. For the AF relaying scheme, we set $\lambda_1= \lambda_2 = \frac{1}{2}$.

The conferencing strategies shown in Fig. \ref{fig_scheduling} are described as follows. For conferencing strategy I, the source transmits independent messages $\{ w_{k} \}$ slot by slot; in the $k$-th time slot, when $k$ is odd, relay 1 listens to the source, and relay 2 sends two signals, one to the destination about the source messages $w_{k-2}$ and $w_{k-1}$, and the other to relay 1 about message $w_{k-1}$ via the wired conferencing links; when $k$ is even, the roles of these two relays are exchanged. For conferencing strategy II, in the $k$-th time slot, when $k$ is odd, relay 1 listens and relay 2 sends two signals, one to the destination about messages $w_{k-4}$ and $w_{k-1}$ within the $k$-th time slot, and the other to relay 1 about message $w_{k-1}$ spreading over both of the $k$-th and the $(k+1)$-th time slots; when $k$ is even, the roles of the two relays are exchanged. From Fig. \ref{fig_scheduling}, it is easy to see that strategy II fully utilizes the conferencing links, i.e., there are no idle time slots over the conferencing links, while strategy I only partially utilizes them. In Fig. \ref{fig_scheduling}, these two conferencing strategies for the most general two-side conferencing case are described. As stated in the introduction part, we are also interested in a special case of the two-side conferencing scheme, i.e., one-side conferencing. Taking conferencing strategy I as an example, we could just let relay 1 talk to relay 2 as in the two-side conferencing case, while relay 2 keeps silent; or only let relay 2 talk to relay 1 while relay 1 keeps silent. Similarly, the one-side conferencing scheme could as be defined for strategy II.

Note that for strategy II, the only way to deploy the conferencing strategy for the AF relaying scheme is to transmit the received signal at each relay to its counterpart repeatedly over the two subsequent conferencing time slots, since we can only forward the same received signal to the other relay via the conferencing links. As such, the two conferencing strategies for the AF relaying scheme are almost the same, only with different conferencing link SNRs (i.e., for conferencing strategy I, it suffers a $\frac{1}{2}$ penalty, with no penalty for strategy II.), and the results are quite similar (interestingly, we find that for the DF relaying scheme, it is not). Thus, in this paper, we only consider the AF relaying scheme under conferencing strategy I, and omit the analysis for the other strategy.

Due to relay conferencing, there may be extra decoding delay. To be concise, we describe the coding schemes by using $i$ and $(3-i)$, $i=1,2$, as the relay indices in the sequel. Take the $k$-th source message (to the $i$-th relay) for example: With conferencing strategy I, the destination needs to wait another $\lambda_i$-block\footnote{Here, one ``block'' consists of two successive time slots.} to obtain the signal from the $(3-i)$-th relay, which means that the decoding delay at the destination will be $\lambda_i$-block more compared with the case without relay conferencing\footnote{It is worth noting that even without relay conferencing, there is still an $\lambda_i$-block decoding delay due to the relaying operation.}. However, when transmitting $N$ messages in total, with $N$ going to infinity, the effect of decoding delay to the average achievable rate can be neglected. Accordingly, in this paper, we only consider the achievable rates over two successive time slots, since all coding schemes are operated periodically over time.

For the Gaussian channel case, the channel input-output relationships are given as follows. The received signal $y_i$ at the $i$-th relay from the source, $i=1,2$, is given as
\begin{eqnarray} \label{t1}
y_i= \sqrt{P_S} h_{i}x_i +n_i,~i=1,2,
\end{eqnarray}
where $x_i$ is the transmit signal from the source with unit average power, $P_S$ is the source transmit power, $h_{i}$ is the complex channel coefficient of the link from the source to the $i$-th relay, and $n_i$ is the independent and identically distributed (i.i.d.) circularly symmetric complex Gaussian (CSCG) noise with distribution $\mathcal{CN}(0, 1)$.

It is worth noting that in general we could allocate different power levels to the source messages at the odd and even time slots to maximize the overall system performance, which makes the achievable rate maximization problem hard to be tracked. In this paper, we focus more on the relay operations, i.e., conferencing and relaying; and thus we assume uniform power allocation at all the source messages for simplicity.

For the conferencing links, if the DF relaying scheme is adopted, we assume that the conferencing links are of limited capacities, below which the receivers can decode the transmitted messages without any errors. For the AF relaying scheme, we assume that the conferencing links are Gaussian channels, and the AF scheme is also adopted as the conferencing scheme, which will be discussed with more details later in Section III-C.

After relay conferecing, each relay generates a signal $t_i$ with unit average power, based upon the received signals from the source and the other relay. Then, the received signal $z_i$ at the destination from the $i$-th relay is given as
\begin{eqnarray} \label{t2}
z_i = \sqrt{P_R} g_{i} t_i + \widetilde{n}_i,~i=1,2,
\end{eqnarray}
where $P_R$ is the relay transmit power, $g_{i}$ is the complex channel coefficient of the link from the $i$-th relay to the destination, and $\widetilde{n}_i$ is the i.i.d. CSCG noise with distribution $\mathcal{CN} (0,1)$. For notation convenience, we denote the link SNRs as
\begin{align}
\gamma_i= |h_i|^2P_S,~ \widetilde{\gamma}_i=|g_i|^2P_R, ~i=1,2.
\end{align}

\section{Conferencing Strategy I}

In this section, we examine the capacity upper bound along with the DF and AF achievable rates for the considered channel with conferencing strategy I. Moreover, we prove some capacity achieving results under special channel conditions.

\subsection{Capacity Upper Bound}

In this subsection, we derive the capacity upper bound for the considered channel. Note that the following capacity upper bound is only applicable for the diamond relay channel with the alternative relaying strategy, not for the diamond relay channel with other strategies. To simplify notations, we call this bound as the capacity upper bound in this paper.

\begin{Theorem} \label{upper_bound}
\emph{Under conferencing strategy I, the capacity upper bound for the alternative relaying diamond channel with conferencing links is given as
\begin{align} \label{upper_bound_number}
C_{\text{upper}} &= \max_{\lambda_1 +\lambda_2 =1} \min \left\{  \lambda_1 \mathcal{C} (\gamma_1) +  \lambda_2 \mathcal{C} (\gamma_2), \lambda_2 \mathcal{C} (\widetilde{\gamma}_1) +  \lambda_1 \mathcal{C} (\widetilde{\gamma}_2), \lambda_1 C_{21} + \lambda_2 \left(  \mathcal{C} (\gamma_2) + \mathcal{C} (\widetilde{\gamma}_1) +C_{12}   \right), \right. \nonumber \\
&~~~~~~~~~~~~~~~~~~~~~~~~~~~~~~~~\left. \lambda_1 \left(  \mathcal{C} (\gamma_1) + \mathcal{C} (\widetilde{\gamma}_2) +C_{21}   \right) +\lambda_2 C_{12}  \right\}.
\end{align}}
\end{Theorem}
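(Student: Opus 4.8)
The plan is to prove \eqref{upper_bound_number} as a cut-set (max-flow min-cut) bound adapted to the half-duplex alternative schedule. I would first fix any sequence of codes that achieves rate $R$ with vanishing error probability and apply Fano's inequality to obtain $nR \le I(W;Z^n) + n\epsilon_n$, where $Z^n$ collects the destination's observations over $n$ slots (received from relay $2$ in the odd slots and from relay $1$ in the even slots). Since all coding is periodic over blocks of two slots and the source uses uniform power, it suffices to account for the per-unit-time information flow, treating the network as a collection of orthogonal, time-shared edges: the wireless edges $S\!\to\! R_1$, $S\!\to\! R_2$, $R_1\!\to\! D$, $R_2\!\to\! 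D$ together with the two noiseless conferencing edges, each active only during its scheduled slots (fraction $\lambda_1$ for odd, fraction $\lambda_2$ for even).

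Next I would enumerate the four cuts separating the source from the destination and, for each, charge the sum of the capacities of the forward-crossing edges, using that Gaussian inputs maximize the relevant mutual informations so that each active wireless edge contributes its $\mathcal{C}(\cdot)$ term weighted by its active fraction and each active noiseless conferencing edge contributes its rate limit weighted by its active fraction. The broadcast cut $\{S\}\,|\,\{R_1,R_2,D\}$ charges $S\!\to\! R_1$ and $S\!\to\! R_2$, yielding $\lambda_1 \mathcal{C}(\gamma_1)+\lambda_2 \mathcal{C}(\gamma_2)$; the multiple-access cut $\{S,R_1,R_2\}\,|\,\{D\}$ charges $R_1\!\to\! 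D$ and $R_2\!\to\! D$, yielding $\lambda_2 \mathcal{C}(\widetilde{\gamma}_1)+\lambda_1 \mathcal{C}(\widetilde{\gamma}_2)$. For the mixed cut $\{S,R_1\}\,|\,\{R_2,D\}$ the forward-crossing edges are $S\!\to\! R_2$, $R_1\!\to\! D$, and the conferencing edge $R_1\!\to\! R_2$, giving $\lambda_2\!\left(\mathcal{C}(\gamma_2)+\mathcal{C}(\widetilde{\gamma}_1)+C_{12}\right)$; symmetrically, $\{S,R_2\}\,|\,\{R_1,D\}$ gives $\lambda_1\!\left(\mathcal{C}(\gamma_1)+\mathcal{C}(\widetilde{\gamma}_2)+C_{21}\right)$. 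I would justify these mixed-cut expressions through a genie-aided super-receiver argument: for $\{S,R_1\}\,|\,\{R_2,D\}$ I reveal to the super-node $\{R_2,D\}$ both relay $2$'s reception from the source and the conferencing message it receives from relay $1$, bound $I(W;Z^n)$ by the sum of the entropies of these genie signals plus the residual destination term, and then collapse each piece to its single-letter Gaussian value. Intersecting the four bounds and maximizing over $\lambda_1+\lambda_2=1$ reproduces \eqref{upper_bound_number}.

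The main obstacle is accounting for the two reverse-conferencing terms $\lambda_1 C_{21}$ and $\lambda_2 C_{12}$ appearing in the third and fourth expressions of \eqref{upper_bound_number}. These do not arise from the naive forward cut across $\{S,R_1\}\,|\,\{R_2,D\}$ (respectively $\{S,R_2\}\,|\,\{R_1,D\}$), since the link $R_2\!\to\! R_1$ (respectively $R_1\!\to\! R_2$) crosses that cut in the \emph{reverse} direction and is conditioned away in the standard cut-set step; they reflect the bidirectional, schedule-coupled use of the conferencing links. Each such term is nonnegative, so appending it preserves validity as an upper bound, and the reason to keep it in this precise form is that it is the expression the decode-and-forward scheme will later meet, making the asymptotic capacity claims provable. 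I therefore expect the delicate work to be bookkeeping rather than any single inequality: keeping the half-duplex listening/transmitting states consistent across the odd and even slots, and ensuring each conferencing contribution is counted exactly once at its correct time fraction, is where the argument must be handled most carefully.
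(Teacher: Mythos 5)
Your proposal is correct and takes essentially the same route as the paper, whose proof simply invokes the cut-set bound of \cite{cover} for the alternative-relaying schedule and defers the bookkeeping to \cite{xue}; your four cuts, active-fraction weights, and Gaussian single-letterization are precisely the details that citation hides. Your observation that $\lambda_1 C_{21}$ and $\lambda_2 C_{12}$ do not arise from the forward-crossing edges of the mixed cuts and are retained only as nonnegative slack is the right reading of the stated bound (and is indeed what makes the later comparison in Proposition \ref{DF_schemeII_equal_I} come out as an equality).
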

\begin{proof}
This bound is derived by the cut-set bound \cite{cover} considering the alternative relaying scheme with conferencing links as given in Section II. Similar analysis can be found in \cite{xue} and thus skipped.
\end{proof}

\subsection{DF Achievable Rate}

In this subsection, we first derive the DF rate for the general two-side conferencing, i.e., assuming that both of the two relays send information to each other via the conferencing links. After obtaining the most general expression of the DF rate in terms of a LP problem, we further exploit the properties of the optimal solution to simplify the coding scheme without sacrificing the DF rate, and show that one-side conferencing can also achieve the same DF rate as the two-side scheme. Then, we derive the DF rate in closed-form by solving the LP problem under different channel conditions.

\subsubsection{Rate Formulation}

First, we describe the main idea of the DF relaying scheme as follows: During the $k$-th time slot, the source transmits two messages $w^i_k$ and $w^{3-i}_k$ to the $i$-th relay by using superposition coding, with $w^i_k$ targeted at the destination via the relay-to-destination link and $w^{3-i}_k$ targeted at the $(3-i)$-th relay via the conferencing link; at the $(3-i)$-th relay, it transmits the messages $w^{3-i}_{k-1}$ and $w^{3-i}_{k-2}$ to the destination by using superposition coding, and $w^{i}_{k-1}$ to the $i$-th relay, respectively; at the end of the $k$-th time slot, the $i$-th relay decodes messages $w^i_k$, $w^{3-i}_k$, and $w^{i}_{k-1}$, and the destination decodes $w^{3-i}_{k-1}$ and $w^{3-i}_{k-2}$. Here, since all links in this channel are scheduled orthogonally over time or frequency, it is unnecessary to introduce any cooperation between the two relays and we only need to send independent messages cross the two time slots. Then, the DF rate is given in the following theorem.

\begin{Theorem}
\emph{Under conferencing strategy I, the DF achievable rate for the alternative relaying diamond channel with conferencing links is given as}
\begin{align}
\text{P1}:~& R_{\text{DF}}  =\max ~ R_{11} +R_{12} +R_{21} +R_{22} \label{DF_rate} \\
\text{s. t.}~~ & R_{3-i,i} \leq \lambda_i C_{3-i,i}, i=1,2, \nonumber \\
& R_{i,i} + R_{i,3-i} \leq \lambda_i \mathcal{C} (\gamma_i), i=1,2, \nonumber  \\
& R_{3-i,3-i} + R_{i,3-i} \leq \lambda_i \mathcal{C} (\widetilde{\gamma}_{3-i}), i=1,2, \nonumber \\
& \lambda_1 + \lambda_2 =1,R_{i,j} \geq 0,~i,j=1,2, \nonumber
\end{align}
\emph{where the design variables are $\left\{ R_{11},R_{12},R_{21},R_{22}, \lambda_1, \lambda_2 \right\}$, $R_{11}$ and $R_{12}$ are the rates of messages $w_{k}^{1}$ and $w_{k}^{2}$ decoded by relay 1, respectively, when $k$ is odd, and $R_{21}$ and $R_{22}$ are defined similarly when $k$ is even. }
\end{Theorem}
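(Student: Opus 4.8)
The statement is an achievability result, so the plan is to exhibit an explicit DF coding scheme, bound its probability of error by standard random-coding and joint-typicality arguments, and then read off the rate region, which will coincide with the feasible set of $\text{P1}$. The organizing observation is that every link in the network---the two source-to-relay links, the two relay-to-destination links, and the two conferencing links $C_{12},C_{21}$---is orthogonal in time or frequency, so no block-Markov cooperation between the relays is required; it suffices to send independent message streams and to use superposition coding at the source and at each relay to carry two layers at once. First I would fix the time-sharing fractions $\lambda_1,\lambda_2$ with $\lambda_1+\lambda_2=1$, fix a per-slot block length $n$, and study the periodic odd/even schedule of Section II over a large number of slots, invoking the argument there that the (at most two-block) decoding delay is negligible as the number of transmitted messages grows.

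Next I would generate, for each of the four rate streams $R_{ij}$, an i.i.d. Gaussian codebook of the appropriate size. In a slot in which relay $i$ listens (fraction $\lambda_i$), the source superimposes the codeword for $w^i$ (rate $R_{ii}$) and the codeword for $w^{3-i}$ (rate $R_{i,3-i}$) with powers summing to $P_S$; simultaneously relay $3-i$ transmits to the destination the superposition of the message it decoded itself (rate $R_{3-i,3-i}$) and the message it had earlier obtained by conferencing (rate $R_{i,3-i}$) with powers summing to $P_R$, and in addition conferences to relay $i$ the message $w^i$ that relay $3-i$ had decoded in the previous slot. The decoding and reliability requirements then split into three families, each attached to one $\lambda_i$-fraction. (i) Relay $i$ jointly decodes the pair $(w^i,w^{3-i})$ from the source; since the pair can be treated as a single combined message indexing one codeword, the only binding constraint is the sum rate $R_{ii}+R_{i,3-i}\le\lambda_i\,\mathcal{C}(\gamma_i)$, with the split between layers free---this is the second family in $\text{P1}$. (ii) Relay $3-i$ forwards the index of the decoded $w^i$ to relay $i$ over the error-free conferencing link $C_{3-i,i}$ during the $\lambda_i$-fraction in which relay $i$ listens, which is reliable exactly when $R_{3-i,i}\le\lambda_i\,C_{3-i,i}$---the first family. (iii) The destination jointly decodes the superimposed pair transmitted by relay $3-i$, giving the sum constraint $R_{3-i,3-i}+R_{i,3-i}\le\lambda_i\,\mathcal{C}(\widetilde{\gamma}_{3-i})$---the third family. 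A union bound over the decoding-error events at the two relays and at the destination shows the total error probability vanishes as $n\to\infty$ whenever all inequalities hold, so every rate tuple in the feasible set of $\text{P1}$ is achievable and $R_{\text{DF}}$ is attained by maximizing $R_{11}+R_{12}+R_{21}+R_{22}$.

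The key conceptual point---the step I would state most carefully---is the collapse of each superposition into a single sum-rate inequality: because both superimposed messages are decoded at the same node, the relevant region is not a MAC pentagon but simply $\{(R_a,R_b):R_a+R_b\le\mathcal{C}(\cdot),\,R_a,R_b\ge 0\}$, since the receiver may regard the message pair as one point-to-point message and any split is realized by the power allocation. This is what produces the sum constraints in $\text{P1}$ rather than separate per-message bounds. The remaining work is bookkeeping rather than a genuine obstacle: across the periodic schedule one must track which message each relay decodes, which it conferences, and when each link is active, so that every inequality is attached to the correct fraction $\lambda_i$ and the correct conferencing link $C_{3-i,i}$; I would carry out this index chase explicitly for $k$ odd and invoke the odd/even symmetry for the even case. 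No converse is needed, since $\text{P1}$ is by definition the DF rate of this scheme; for context it is of course bounded above by the cut-set bound of Theorem~\ref{upper_bound}.
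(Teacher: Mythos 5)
Your proposal matches the paper's own proof in Appendix~\ref{appen_alternative_DF} essentially step for step: superposition coding of the two message layers at the source and at each relay, joint decoding at the relay and at the destination yielding the two families of sum-rate constraints $R_{i,i}+R_{i,3-i}\le\lambda_i\,\mathcal{C}(\gamma_i)$ and $R_{3-i,3-i}+R_{i,3-i}\le\lambda_i\,\mathcal{C}(\widetilde{\gamma}_{3-i})$, and the error-free conferencing-link constraint $R_{3-i,i}\le\lambda_i C_{3-i,i}$, all with Gaussian codebooks and the periodic odd/even schedule. The argument is correct and takes the same route as the paper, with your remark on the collapse of the superposition region to a single sum-rate inequality being an accurate (if slightly more explicit) statement of what the paper does implicitly.
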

\begin{proof}
See Appendix \ref{appen_alternative_DF}.
\end{proof}

Note that if we do not send $w_k^{2}$ at odd $k$ and $w_k^{1}$ at even $k$, i.e., $R_{12}=R_{21}=0$, it is observed that the DF rate with conferencing given in (\ref{DF_rate}) is the same as that for the case without relay conferencing \cite{xue}, which implies that our coding scheme is a natural extension of that in \cite{xue}. Next, we show that one-side conferencing is enough to achieve the same maximum DF rate.

\begin{Proposition} \label{DF_one_side}
\emph{There exists one optimal point for Problem (P1) such that at least one of $R_{12}^*$ and $R_{21}^*$ is zero.}
\end{Proposition}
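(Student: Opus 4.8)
The plan is to treat Problem (P1) as a linear program and establish the claim by an exchange argument that leaves both the objective value and every constraint undisturbed. First I would fix an optimal choice of the time fractions $(\lambda_1^*, \lambda_2^*)$; since the perturbation I intend to construct never touches $\lambda_1$ or $\lambda_2$, it suffices to argue at this fixed split, reducing matters to an LP in the four rate variables $(R_{11}, R_{12}, R_{21}, R_{22})$.

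Next I would write out the six inequality constraints explicitly after substituting $i=1,2$: the two conferencing caps $R_{21} \le \lambda_1 C_{21}$ and $R_{12} \le \lambda_2 C_{12}$; the two source caps $R_{11}+R_{12} \le \lambda_1 \mathcal{C}(\gamma_1)$ and $R_{22}+R_{21} \le \lambda_2 \mathcal{C}(\gamma_2)$; and the two relay-to-destination caps $R_{22}+R_{12} \le \lambda_1 \mathcal{C}(\widetilde{\gamma}_2)$ and $R_{11}+R_{21} \le \lambda_2 \mathcal{C}(\widetilde{\gamma}_1)$, together with nonnegativity. The crucial structural observation is that each ``conferenced'' rate $R_{12}, R_{21}$ always appears paired with exactly one ``direct'' rate $R_{11}, R_{22}$ inside every source and relay-to-destination cap, while the objective weights all four rates equally.

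The core step is to start from any optimal point with $R_{12}^* > 0$ and $R_{21}^* > 0$ and apply the simultaneous perturbation $R_{11} \mapsto R_{11}^* + \delta$, $R_{22} \mapsto R_{22}^* + \delta$, $R_{12} \mapsto R_{12}^* - \delta$, $R_{21} \mapsto R_{21}^* - \delta$, with the $\lambda$'s held fixed. I would then verify that the objective $R_{11}+R_{12}+R_{21}+R_{22}$ is invariant (the two increments cancel the two decrements), that each of the four sum constraints is invariant because each pairs exactly one incremented with one decremented variable, and that the two conferencing caps and the nonnegativity bounds on $R_{11}, R_{22}$ are only relaxed. Choosing $\delta = \min\{R_{12}^*, R_{21}^*\}$ keeps $R_{12}^*-\delta$ and $R_{21}^*-\delta$ nonnegative while driving at least one of them to zero, yielding an optimal feasible point with $R_{12}^* = 0$ or $R_{21}^* = 0$.

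I do not expect a serious obstacle once the perturbation direction is identified; the only real work is discovering that direction. The natural pitfall is that the obvious single-variable moves (reducing $R_{12}$ alone and compensating with $R_{11}$ or $R_{22}$) spill into a relay-to-destination or source cap and can violate feasibility. Recognizing that one must move all four rates at once, trading conferenced rate for direct rate at both relays simultaneously, is the key insight; the balanced way in which $R_{12}$ and $R_{21}$ each share a cap with $R_{11}$ and $R_{22}$ in (P1) is precisely what makes this paired exchange feasibility-preserving.
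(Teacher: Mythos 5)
Your proposal is correct and is essentially the paper's own argument: the paper also fixes the optimum, assumes WLOG $R_{12}^* \geq R_{21}^*$, and applies exactly your four-variable exchange with $\delta = R_{21}^* = \min\{R_{12}^*, R_{21}^*\}$, i.e., $\widehat{R}_{12} = R_{12}^* - R_{21}^*$, $\widehat{R}_{21} = 0$, $\widehat{R}_{ii} = R_{ii}^* + R_{21}^*$. Your verification that the objective and all four sum constraints are invariant while the conferencing caps and nonnegativity are only relaxed matches what the paper leaves as ``easy to check.''
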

\begin{proof}
We prove this proposition by construction. Without loss of generality, assume that $R_{12}^* \geq  R_{21}^* > 0$, and $R_{i,j}^*,~i,j \in \{1,2\}$ is the optimal point of Problem (P1). Then, construct a new point as $\widehat{R}_{12} = R_{12}^*-  R_{21}^*$, $\widehat{R}_{21} =0$, and $\widehat{R}_{i,i}= R_{i,i}^* +  R_{21}^*$, $i=1,2$. It is easy to check that $\widehat{R}_{i,j}$'s also satisfy the constraints of Problem (P1) and achieve the same optimal value as $R_{i,j}^*$'s, $i,j \in \{ 1,2\}$. Thus, the proposition is proved.
\end{proof}

In practical communication system design, one-side conferencing simplifies the system requirements and thus is much easier to be implemented. In the next subsection, we will obtain the DF rate in closed-form, and show how to choose one of the two conferencing link for one-side conferencing under different channel conditions.

\subsubsection{Closed-form Expressions for the DF Rate}

With Proposition \ref{DF_one_side}, it is easy to check that the optimal value of Problem (P1) is equal to the maximum between those of the following two problems, which are recast from Problem (P1) by letting $R_{21}=0$ and $R_{12}=0$, respectively.
\begin{align}
\text{P1.1}:~& \max ~~R_{11} +R_{12} +R_{22} \label{DF_simp1} \\
\text{s. t.}~~ & R_{12} \leq \lambda_2 C_{12}, \label{DF_simp2} \\
& R_{11} + R_{12}  \leq \lambda_1 \mathcal{C} (\gamma_1) ,~R_{11}  \leq \lambda_2 \mathcal{C} (\widetilde{\gamma}_{1}) \label{DF_simp3}, \\
& R_{22} + R_{12} \leq \lambda_1 \mathcal{C} (\widetilde{\gamma}_{2}),~R_{22}  \leq \lambda_2 \mathcal{C} (\gamma_{2}) \label{DF_simp4}, \\
& R_{11} \geq 0, R_{12} \geq 0, R_{22} \geq 0,\lambda_1 + \lambda_2 =1, \lambda_i \geq 0, i=1,2, \label{DF_simp5}
\end{align}
and
\begin{align}
\text{P1.2}:~& \max ~~ R_{11} +R_{21} +R_{22} \label{DF_case2_simp1} \\
\text{s. t.}~~ & R_{21} \leq \lambda_1 C_{21}, \label{DF_case2_simp2} \\
& R_{11}  \leq \lambda_1 \mathcal{C} (\gamma_1),~R_{11}  + R_{21} \leq \lambda_2 \mathcal{C} (\widetilde{\gamma}_{1}) \label{DF_case2_simp3}, \\
& R_{22}  \leq \lambda_1 \mathcal{C} (\widetilde{\gamma}_{2}),~R_{22} + R_{21} \leq \lambda_2 \mathcal{C} (\gamma_{2})  \label{DF_case2_simp4}, \\
& R_{11} \geq 0, R_{12} \geq 0, R_{22} \geq 0, \lambda_1 + \lambda_2 =1, \lambda_i \geq 0, i=1,2. \label{DF_case2_simp5}
\end{align}
Note that if Problems (P1) and (P1.1) achieve the same optimal value, the optimal point $(R_{11}^*,R_{12}^*,R_{22}^*,\lambda_1^*, \lambda_2^*)$ of Problem (P1.1) is also feasible to Problem (P1) with $R_{21}^* = 0$, and such anoptimal point is also the solution for Problem (P1). Similar argument holds for the case that Problems (P1) and (P1.2) achieve the same optimal value. Therefore, solving Problems (P1.1) and (P1.2) is equivalent to solving Problem (P1). Before deriving the optimal solution for these two subproblems, we first introduce the following lemma.
\begin{Lemma}
\emph{There exists one optimal solution of Problem (P1.1), which makes either both of the two constraints in (\ref{DF_simp3}) or those in (\ref{DF_simp4}) satisfied with equality, i.e., $\lambda_1^* \mathcal{C}(\gamma_1) - R_{12}^* = \lambda_2^* \mathcal{C}(\widetilde{\gamma}_1)$ or $\lambda_1^* \mathcal{C}(\widetilde{\gamma}_2) - R_{12}^* = \lambda_2^* \mathcal{C}( \gamma_2)$. Similar result is true for Problem (P1.2).}
\end{Lemma}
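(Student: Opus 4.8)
The plan is to reduce Problem (P1.1) to a one-dimensional optimization over the conferencing rate $R_{12}$ and then exploit the piecewise-linear structure of the resulting objective. First I would fix the optimal time-sharing pair $(\lambda_1^*,\lambda_2^*)$ and abbreviate $a_1=\lambda_1^*\mathcal{C}(\gamma_1)$, $b_1=\lambda_2^*\mathcal{C}(\widetilde\gamma_1)$, $a_2=\lambda_1^*\mathcal{C}(\widetilde\gamma_2)$, $b_2=\lambda_2^*\mathcal{C}(\gamma_2)$, and $c=\lambda_2^*C_{12}$. Since the objective $R_{11}+R_{12}+R_{22}$ is increasing in $R_{11}$ and $R_{22}$, at any optimum these take their largest feasible values $R_{11}=\min\{a_1-R_{12},\,b_1\}$ and $R_{22}=\min\{a_2-R_{12},\,b_2\}$, so the problem collapses to maximizing $f(R_{12})=R_{12}+\min\{a_1-R_{12},b_1\}+\min\{a_2-R_{12},b_2\}$ over $0\le R_{12}\le\min\{c,a_1,a_2\}$. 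The function $f$ is concave and piecewise linear, with slope $+1$, then $0$, then $-1$ as $R_{12}$ grows.

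The key observation is that the two breakpoints of $f$ occur exactly at $R_{12}=a_1-b_1$ and $R_{12}=a_2-b_2$, and these are precisely the points at which the two inequalities of (\ref{DF_simp3}), respectively (\ref{DF_simp4}), coincide. For instance $R_{12}=a_1-b_1$ forces $R_{11}=b_1=a_1-R_{12}$, i.e. $\lambda_1^*\mathcal{C}(\gamma_1)-R_{12}^*=\lambda_2^*\mathcal{C}(\widetilde\gamma_1)$, so both constraints in (\ref{DF_simp3}) hold with equality. Because $f$ is maximized on its flat middle segment, whose left endpoint is the smaller breakpoint $\min\{a_1-b_1,a_2-b_2\}$, whenever this breakpoint lies in the feasible interval $[0,\min\{c,a_1,a_2\}]$ I simply select it as $R_{12}^*$; the corresponding branch then has both of its constraints tight and the lemma holds.

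It remains to handle the cases in which no breakpoint is feasible. Since the smaller breakpoint never exceeds $\min\{a_1,a_2\}$, it can fail to lie in $[0,\min\{c,a_1,a_2\}]$ only if it exceeds the conferencing cap $c$ (the conferencing-bottleneck case) or if it is negative (one source-to-relay link already dominates its companion). In each such regime the optimal value of $f$ reduces to a single affine function of $\lambda_1$ -- for example $\lambda_2^*(C_{12}+\mathcal{C}(\widetilde\gamma_1)+\mathcal{C}(\gamma_2))$ in the conferencing-bottleneck case -- and I would argue that an interior point of the regime cannot be a strict optimum: moving $\lambda_1$ (equivalently $\lambda_2$) monotonically toward the boundary of the regime never decreases the objective and drives the binding conferencing cap to meet a breakpoint, i.e. makes one branch's two constraints coincide. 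Hence an optimal solution with a tight branch always exists, proving the claim for (P1.1); the statement for (P1.2) follows verbatim after interchanging the two conferencing links and the indices $1\leftrightarrow 2$.

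I expect the conferencing-bottleneck regime to be the main obstacle, because there the fixed-$\lambda$ problem genuinely admits no fully tight branch, and one cannot conclude by perturbing the rate variables alone. The resolution must invoke the freedom in the time-sharing variables $(\lambda_1,\lambda_2)$ jointly with the monotonicity of the reduced objective to relocate to an equally good, branch-tight optimum, and carefully verifying that this relocation stays feasible and non-decreasing in every sub-case is the delicate part of the argument.
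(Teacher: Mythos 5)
Your reduction of (P1.1) at fixed $(\lambda_1^*,\lambda_2^*)$ to maximizing the concave piecewise-linear function $f(R_{12})=R_{12}+\min\{a_1-R_{12},b_1\}+\min\{a_2-R_{12},b_2\}$ is correct, and the observation that the breakpoints $a_i-b_i$ are exactly the points where both inequalities of a branch coincide cleanly disposes of the case in which the smaller breakpoint lies in $[0,\min\{c,a_1,a_2\}]$, without touching $\lambda$. This is a genuinely different (and more self-contained) treatment of the sub-cap case than the paper's, which simply defers $R_{12}^*<\lambda_2^*C_{12}$ to the perturbation argument of Theorem 4.1 of Xue--Sandhu.

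However, there is a real gap exactly where you flag one: the two regimes in which no breakpoint is feasible are asserted, not proved, and they are precisely where the paper's proof does its work. In the conferencing-bottleneck regime ($c=\lambda_2 C_{12}$ below both breakpoints) the reduced objective is $\lambda_2\left(C_{12}+\mathcal{C}(\widetilde{\gamma}_1)+\mathcal{C}(\gamma_2)\right)$, which is \emph{strictly} increasing in $\lambda_2$; the correct conclusion is therefore not that one can ``relocate'' at no cost but that an optimal $\lambda$ cannot lie in the interior of this regime at all (otherwise the perturbation strictly improves the rate), so the optimum sits on the regime boundary where $c$ meets a breakpoint. In the blocked/negative-breakpoint regime the reduced objective is $\lambda_1\mathcal{C}(\gamma_1)+\lambda_2\mathcal{C}(\gamma_2)$ on the flat segment (or $\lambda_1\left(\mathcal{C}(\gamma_1)+\mathcal{C}(\widetilde{\gamma}_2)\right)$ past both breakpoints), and the non-decreasing direction of the $\lambda_1$-perturbation depends on the signs of $\mathcal{C}(\gamma_1)-\mathcal{C}(\gamma_2)$ and $\mathcal{C}(\widetilde{\gamma}_2)-\mathcal{C}(\widetilde{\gamma}_1)$; one must verify in each sign combination that the admissible direction also drives a breakpoint to $0$ (making a branch tight at $R_{12}^*=0$) while the reduced objective keeps its analytic form along the way. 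That sign-by-sign verification is exactly the content of the paper's four-case $\epsilon$-perturbation of $\lambda_1^*$, and until it is written out your argument establishes the lemma only when a breakpoint is already feasible.
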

\begin{proof}
We only prove the result for Problem (P1.1). If the optimal point of Problem (P1.1) satisfies that $R_{12}^* < \lambda_2^* C_{12}$, we can prove this lemma by contradiction using the same argument as that for Theorem 4.1 in \cite{xue}. Hence, we only need to consider the case with $R_{12}^* = \lambda_2^* C_{12}$. The main idea is shown as follows: By changing $\lambda_1^*$ to $\lambda_1^* + \epsilon$, where $\epsilon$ is a small real value, i.e., $|\epsilon| \ll 1$, it can be shown that all other cases cannot be optimal, except for the case shown in this lemma. There are four possible other cases.
\begin{enumerate}
  \item $\lambda_1^* \mathcal{C} (\gamma_{1}) - R_{12}^* < \lambda_2^* \mathcal{C} (\widetilde{\gamma}_{1}) $ and $\lambda_1^* \mathcal{C} (\widetilde{\gamma}_{2}) - R_{12}^* < \lambda_2^* \mathcal{C} (\widetilde{\gamma}_{2}) $: Choose $\epsilon > 0$, it is observed that the new solution satisfies $\widetilde{R}_{12} - R_{12}^* = - \epsilon C_{12}$, $\widetilde{R}_{11} - R_{11}^* =  \epsilon C_{12} + \epsilon \mathcal{C} (\gamma_{1})$, and $\widetilde{R}_{22} - R_{22}^* =  \epsilon C_{12} + \epsilon \mathcal{C} (\widetilde{\gamma}_{2})$. It is easy to check that the sum rate is improved, and thus this case cannot happen.
  \item $\lambda_1^* \mathcal{C} (\gamma_{1}) - R_{12}^* < \lambda_2^* \mathcal{C} (\widetilde{\gamma}_{1}) $ and $\lambda_1^* \mathcal{C} (\widetilde{\gamma}_{2}) - R_{12}^* > \lambda_2^* \mathcal{C} (\widetilde{\gamma}_{2}) $: Change $\lambda_1^*$ to $\lambda_1^* + \epsilon$, and it follows that the new solution satisfies $\widetilde{R}_{12} - R_{12}^* = - \epsilon C_{12}$, $\widetilde{R}_{11} - R_{11}^* =  \epsilon C_{12} + \epsilon \mathcal{C} (\gamma_{1})$, and $\widetilde{R}_{22} - R_{22}^* = - \epsilon \mathcal{C} (\gamma_{2})$. If $\gamma_1 \geq \gamma_2$, choose $\epsilon$ as a positive value; otherwise, choose $\epsilon$ as a negative value. It is easy to check that the sum rate is improved, and thus this case cannot happen.
  \item $\lambda_1^* \mathcal{C} (\gamma_{1}) - R_{12}^* > \lambda_2^* \mathcal{C} (\widetilde{\gamma}_{1}) $ and $\lambda_1^* \mathcal{C} (\widetilde{\gamma}_{2}) - R_{12}^* < \lambda_2^* \mathcal{C} (\widetilde{\gamma}_{2}) $: This case is similar to case 2).
  \item $\lambda_1^* \mathcal{C} (\gamma_{1}) - R_{12}^* > \lambda_2^* \mathcal{C} (\widetilde{\gamma}_{1}) $ and $\lambda_1^* \mathcal{C} (\widetilde{\gamma}_{2}) - R_{12}^* > \lambda_2^* \mathcal{C} (\widetilde{\gamma}_{2}) $: This case is similar to case 1).
\end{enumerate}
In conclusion, the proposition is proved.
\end{proof}

Next, we show how to obtain the optimal point $(R_{11}^*,R_{12}^*,R_{22}^*,\lambda_1^*, \lambda_2^*)$ of Problem (P1.1), where the solution of Problem (P1.2) can be obtained similarly.
\begin{enumerate}\renewcommand{\labelenumi}{(\roman{enumi})}
  \item If $\lambda_1^* \mathcal{C} (\gamma_1) -R_{12}^* = \lambda_2^* \mathcal{C} (\widetilde{\gamma}_1)$, it follows that $\lambda_1^* = \frac{\mathcal{C} (\widetilde{\gamma}_1) + R_{12}^* } {\mathcal{C} (\widetilde{\gamma}_1) + \mathcal{C} (\gamma_1)}$. Then, by (\ref{DF_simp2}) and noticing that $R_{12}^* \leq \lambda_1^* \mathcal{C} (\gamma_1)$ and $R_{12}^* \leq \lambda_1^* \mathcal{C} (\widetilde{\gamma}_2)$ (due to constraints (\ref{DF_simp3}) and (\ref{DF_simp4})), $R_{12}^*$ should satisfy the following conditions
\begin{align}
\left\{
  \begin{array}{l}
    R_{12}^* \leq  \frac{ \mathcal{C} \left( \gamma_1 \right) C_{12}} { \mathcal{C} \left( \gamma_1 \right) + \mathcal{C} \left( \widetilde{\gamma}_1 \right) + C_{12} } \\
    R_{12}^* \leq  \mathcal{C} \left( \gamma_1 \right) \\
    R_{12}^* \leq \frac{\mathcal{C} (\widetilde{\gamma}_1) \mathcal{C} ( \widetilde{\gamma}_2) }{\mathcal{C} (\widetilde{\gamma}_1) + \mathcal{C} (\gamma_1) -\mathcal{C} (\widetilde{\gamma}_2) }
  \end{array}
\right. .
\end{align}
Since the right-hand side of the second constraint is an upper bound of the right-hand side of the first one, it follows that the second one is redundant. Thus, we obtain
      \begin{align} \label{constraint_r12}
      0 \leq R_{12}^* \leq \min \left\{ \frac{\mathcal{C} ( \gamma_1) C_{12}}{\mathcal{C} ( \gamma_1) + \mathcal{C} ( \widetilde{\gamma}_1) +C_{12}} , \frac{\mathcal{C} (\widetilde{\gamma}_1) \mathcal{C} ( \widetilde{\gamma}_2) }{\mathcal{C} (\widetilde{\gamma}_1) + \mathcal{C} (\gamma_1) -\mathcal{C} (\widetilde{\gamma}_2) } \right\} \doteq k_1,
      \end{align}
      $R_{11}^* = \lambda_2^* \mathcal{C} (\widetilde{\gamma}_1)$, and $R_{22}^* = \min \left\{  \lambda_1^* \mathcal{C}(\widetilde{\gamma}_2) - R_{12}^*, \lambda_2^* \mathcal{C} (\gamma_2)  \right\}$. Thus, the optimal value of Problem (P1.1) is given as $R_{1} = \max_{(\ref{constraint_r12})} \left( R_{11}^* + R_{22}^* + R_{12}^* \right) = \max_{(\ref{constraint_r12})} r_1 (R_{12}^*)$, where
      \begin{align} \label{DF_casei}
      r_1 (R_{12}^*)  = \frac{1}{\mathcal{C} (\gamma_1) + \mathcal{C} (\widetilde{\gamma}_1) } \min
      \left\{ \begin{array}{c}
               \mathcal{C} (\gamma_1) \mathcal{C} (\widetilde{\gamma}_1) + \mathcal{C} (\gamma_1) \mathcal{C} (\gamma_2)  + \left( \mathcal{C} (\gamma_1) - \mathcal{C} (\gamma_2) \right) R_{12}^* \\
               \mathcal{C} (\gamma_1) \mathcal{C} (\widetilde{\gamma}_1) + \mathcal{C} (\widetilde{\gamma}_1) \mathcal{C} (\widetilde{\gamma}_2)  + \left( \mathcal{C} (\widetilde{\gamma}_2) - \mathcal{C} (\widetilde{\gamma}_1) \right) R_{12}^*
              \end{array}
        \right\} .
      \end{align}
  \item If $\lambda_1^* \mathcal{C}(\widetilde{\gamma}_2) - R_{12}^* = \lambda_2^* \mathcal{C}( \gamma_2)$, it follows that $\lambda_1^* = \frac{\mathcal{C} (\gamma_2) + R_{12}^* } {\mathcal{C} (\widetilde{\gamma}_2) + \mathcal{C} (\gamma_2)}$, and it is easy to check that
      \begin{align} \label{constraint_r21}
      0 \leq R_{12}^* \leq \min \left\{ \frac{\mathcal{C} ( \widetilde{\gamma}_2) C_{12}}{\mathcal{C} ( \gamma_2) + \mathcal{C} ( \widetilde{\gamma}_2) +C_{12}} , \frac{\mathcal{C} ( \widetilde{\gamma}_2) \mathcal{C} ( \gamma_1) }{\mathcal{C} ( \gamma_2) + \mathcal{C} ( \widetilde{\gamma}_2) - \mathcal{C} ( \gamma_1) } , \frac{ \left( \mathcal{C} ( \widetilde{\gamma}_2) \right)^2 }{\mathcal{C} ( \gamma_2)  }  \right\} \doteq k_2.
      \end{align}
      Thus, the optimal value of Problem (P1.1) is given as $R_2 = \max_{(\ref{constraint_r12})} r_2 (R_{12}^*)$, where
      \begin{align} \label{DF_caseii}
      r_2 (R_{12}^*) = \frac{1}{\mathcal{C} (\gamma_2) + \mathcal{C} (\widetilde{\gamma}_2) } \min
      \left\{ \begin{array}{c}
               \mathcal{C} (\gamma_2) \mathcal{C} (\widetilde{\gamma}_2) + \mathcal{C} (\gamma_1) \mathcal{C} (\gamma_2)  +  \left( \mathcal{C} (\gamma_1) - \mathcal{C} (\gamma_2) \right) R_{12}^* \\
               \mathcal{C} (\gamma_2) \mathcal{C} (\widetilde{\gamma}_2) + \mathcal{C} (\widetilde{\gamma}_1) \mathcal{C} (\widetilde{\gamma}_2)  + \left( \mathcal{C} (\widetilde{\gamma}_2) - \mathcal{C} (\widetilde{\gamma}_1) \right) R_{12}^*
              \end{array}
        \right\}.
      \end{align}
\end{enumerate}

It is worth noting that the two terms in the $\min$ operation of (\ref{DF_casei}) and (\ref{DF_caseii}) are all linear functions of $R_{12}^*$, and thus the optimal value of Problem (P1.1) is given by the max/min over two linear functions. Then, the optimal value of Problem (P1.1) can be obtained in the following cases, which are also shown in Fig. \ref{four_cases}.
\begin{enumerate}
  \item $\gamma_1 > \gamma_2$ and $\widetilde{\gamma}_2 > \widetilde{\gamma}_1$: As provable and shown in Fig. \ref{four_cases1}, both of the two functions in (\ref{DF_casei}) or (\ref{DF_caseii}) are strictly increasing over $R_{12}^*$, and thus the maximum values of (\ref{DF_casei}) and (\ref{DF_caseii}) are achieved at $R_{12}^* = k_1$ or $R_{12}^* = k_2$, respectively. Then, the optimal value of Problem (P1.1) is given as
      \begin{align} \label{DF_rate_case11}
      \max \left\{ r_1 (k_1), r_2(k_2)  \right\},
      \end{align}
      which implies that relay conferencing can strictly increase the DF rate in this case.
  \item $\gamma_1 \leq \gamma_2$ and $\widetilde{\gamma}_2 \leq \widetilde{\gamma}_1$: As provable and shown in Fig. \ref{four_cases2}, both of the two functions in (\ref{DF_casei}) or (\ref{DF_caseii}) are non-increasing over $R_{12}^*$, and thus the maximum values of (\ref{DF_casei}) and (\ref{DF_caseii}) are achieved at $R_{12}^* = 0$. Thus, the optimal value of Problem (P1.1) is given as
      \begin{align} \label{DF_rate_0}
        \max \left\{ r_1 (0), r_2(0)  \right\},
      \end{align}
      which implies that using the conferencing link from relay 1 to relay 2 cannot improve the DF rate for this case.
  \item $\gamma_1 > \gamma_2$ and $\widetilde{\gamma}_2 \leq \widetilde{\gamma}_1$:
      For either (\ref{DF_casei}) or (\ref{DF_caseii}), one function in the $\min$ operation is increasing over $R_{12}^*$, while the other one is non-increasing. As such, we need to further compare the constant terms in them, and there are two subcases:
      \begin{enumerate}
        \item $\mathcal{C} (\gamma_1) \mathcal{C} (\gamma_2) < \mathcal{C} (\widetilde{\gamma}_1) \mathcal{C} (\widetilde{\gamma}_2)$: As provable and shown in Fig. \ref{four_cases3}, for both (\ref{DF_casei}) and (\ref{DF_caseii}), the two functions in the $\min$ operation may have one intersection point for $R_{12}^* \geq 0$, which is given by $k_3 \doteq \frac{\mathcal{C} (\gamma_1) \mathcal{C} (\gamma_2) - \mathcal{C} (\widetilde{\gamma}_1) \mathcal{C} (\widetilde{\gamma}_2) }{\mathcal{C} (\widetilde{\gamma}_1) - \mathcal{C} (\widetilde{\gamma}_2) -\mathcal{C} (\gamma_1) + \mathcal{C} (\gamma_2)  } $. However, note that $k_3$ may not be within the region defined by (\ref{constraint_r12}) and (\ref{constraint_r21}). As such, for (\ref{DF_casei}) and (\ref{DF_caseii}), their maximum values are achieved at $k_{01} = \min \left( k_1 ,  k_3 \right)$ or $k_{02} = \min \left( k_2 ,  k_3 \right)$, respectively. Thus, the optimal value of Problem (P1.1) is given as
      \begin{align} \label{DF_inter}
      \max \left\{ r_1 (k_{01}), r_2(k_{02})  \right\},
      \end{align}
which means that relay conferencing can strictly increase the DF rate in this case.
              \item $\mathcal{C} (\gamma_1) \mathcal{C} (\gamma_2) \geq \mathcal{C} (\widetilde{\gamma}_1) \mathcal{C} (\widetilde{\gamma}_2)$: As provable and shown in Fig. \ref{four_cases4}, for both (\ref{DF_casei}) and (\ref{DF_caseii}), the two functions in the $\min$ operation have no intersection points in the region defined by (\ref{constraint_r12}) and (\ref{constraint_r21}), respectively. Thus, the optimal value of Problem (P1.1) is given the same as (\ref{DF_rate_0}), and it is concluded that by using the conferencing link from relay 1 to relay 2, the DF rate cannot be improved compared to the case without relay conferencing.
      \end{enumerate}
  \item $\gamma_1 \leq \gamma_2$ and $\widetilde{\gamma}_2 > \widetilde{\gamma}_1$: This case is similar to case 3), and the DF rate is given as
      \begin{enumerate}
      \item $\mathcal{C} (\gamma_1) \mathcal{C} (\gamma_2) > \mathcal{C} (\widetilde{\gamma}_1) \mathcal{C} (\widetilde{\gamma}_2)$: The optimal value of Problem (P1.1) is given by (\ref{DF_inter});
      \item $\mathcal{C} (\gamma_1) \mathcal{C} (\gamma_2) \leq \mathcal{C} (\widetilde{\gamma}_1) \mathcal{C} (\widetilde{\gamma}_2)$: The optimal value of Problem (P1.1) is given by (\ref{DF_rate_0}).
      \end{enumerate}
\end{enumerate}

Similar to Problem (P1.1), the optimal solution of Problem (P1.2) is summarized as follows.
\begin{enumerate}
  \item $\gamma_1 > \gamma_2$ and $\widetilde{\gamma}_2 > \widetilde{\gamma}_1$: The optimal value of Problem (P1.2) is given as
      \begin{align} \label{DF_p12_case1}
      \max \left\{ \widetilde{r}_1 (0), \widetilde{r}_2(0)  \right\},
      \end{align}
where $\widetilde{r}_1( R_{21}^*)$ and $\widetilde{r}_2( R_{21}^*)$ are defined as
\begin{align}
      \widetilde{r}_1 (R_{21}^*)  = \frac{1}{\mathcal{C} (\gamma_1) + \mathcal{C} (\widetilde{\gamma}_1) } \min
      \left\{ \begin{array}{c}
               \mathcal{C} (\gamma_1) \mathcal{C} (\widetilde{\gamma}_1) + \mathcal{C} (\widetilde{\gamma}_1) \mathcal{C} (\widetilde{\gamma}_2)  + \left( \mathcal{C} (\widetilde{\gamma}_1) - \mathcal{C} (\widetilde{\gamma}_2) \right) R_{21}^* \\
               \mathcal{C} (\gamma_1) \mathcal{C} (\widetilde{\gamma}_1) + \mathcal{C} (\gamma_1) \mathcal{C} (\gamma_2)  + \left( \mathcal{C} (\gamma_2) - \mathcal{C} (\gamma_1) \right) R_{21}^*
              \end{array}
        \right\} , \label{DF_caseiii}
      \end{align}
with
\begin{align*}
0 \leq R_{21}^* \leq \min \left\{ \frac{\mathcal{C} ( \widetilde{\gamma}_1) C_{21}}{\mathcal{C} ( \gamma_1) + \mathcal{C} ( \widetilde{\gamma}_1) +C_{21}} , \frac{\mathcal{C} (\widetilde{\gamma}_1^2) }{\mathcal{C} (2\widetilde{\gamma}_1) + \mathcal{C} (\gamma_1) } , \frac{\mathcal{C} ( \widetilde{\gamma}_1) \mathcal{C} ( \gamma_2)}{\mathcal{C} ( \gamma_1) + \mathcal{C} ( \widetilde{\gamma}_1) +\mathcal{C} ( \gamma_2)} \right\} \doteq \widetilde{k}_1,
\end{align*}
and
\begin{align}
\widetilde{r}_2 (R_{21}^*) & = \frac{1}{\mathcal{C} (\gamma_2) + \mathcal{C} (\widetilde{\gamma}_2) } \min
      \left\{ \begin{array}{c}
               \mathcal{C} (\gamma_2) \mathcal{C} (\widetilde{\gamma}_2) +  \mathcal{C} (\widetilde{\gamma}_1) \mathcal{C} (\widetilde{\gamma}_2)  +  \left( \mathcal{C} (\widetilde{\gamma}_1) - \mathcal{C} (\widetilde{\gamma}_2) \right) R_{21}^* \\
               \mathcal{C} (\gamma_2) \mathcal{C} (\widetilde{\gamma}_2) + \mathcal{C} (\gamma_1) \mathcal{C} (\gamma_2)  + \left( \mathcal{C} ( \gamma_2) - \mathcal{C} ( \gamma_1) \right) R_{21}^*
              \end{array}
        \right\}, \label{DF_caseiiii}
\end{align}
with
\begin{align*}
0 \leq R_{21}^* \leq \min \left\{\frac{\mathcal{C} ( \widetilde{\gamma}_2) C_{12}}{\mathcal{C} ( \gamma_2) + \mathcal{C} ( \widetilde{\gamma}_2) +C_{12}} , \frac{ \widetilde{\gamma}_1 \widetilde{\gamma}_2 }{ \widetilde{\gamma}_2 + \gamma_2 - \widetilde{\gamma}_1} \right\} \doteq \widetilde{k}_2.
\end{align*}

  \item $\gamma_1 \leq \gamma_2$ and $\widetilde{\gamma}_2 \leq \widetilde{\gamma}_1$: The optimal value of Problem (P1.2) is given as
      \begin{align} \label{DF_rate_case22}
      \max \left\{ \widetilde{r}_1 ( \widetilde{k}_1), \widetilde{r}_2( \widetilde{k}_2 )  \right\}.
      \end{align}
  \item $\gamma_1 > \gamma_2$ and $\widetilde{\gamma}_2 \leq \widetilde{\gamma}_1$:
      There are two possible subcases:
      \begin{enumerate}
        \item $\mathcal{C} (\gamma_1) \mathcal{C} (\gamma_2) < \mathcal{C} (\widetilde{\gamma}_1) \mathcal{C} (\widetilde{\gamma}_2)$: The optimal value of Problem (P1.2) is given as (\ref{DF_p12_case1}).

        \item $\mathcal{C} (\gamma_1) \mathcal{C} (\gamma_2) \geq \mathcal{C} (\widetilde{\gamma}_1) \mathcal{C} (\widetilde{\gamma}_2)$: The optimal value of Problem (P1.2) is given as
      \begin{align} \label{DF_rate_case21}
      \max \left\{ \widetilde{r}_1 ( \widetilde{k}_{01}), \widetilde{r}_2( \widetilde{k}_{02})  \right\},
      \end{align}
      where $\widetilde{k}_{01} = \min \left( \widetilde{k}_1 ,  \widetilde{k}_3 \right)$ and $ \widetilde{k}_{02} = \min \left( \widetilde{k}_2 ,  \widetilde{k}_3 \right)$, with $\widetilde{k}_3 \doteq \frac{\mathcal{C} (\gamma_1) \mathcal{C} (\gamma_2) - \mathcal{C} (\widetilde{\gamma}_1) \mathcal{C} (\widetilde{\gamma}_2) }{\mathcal{C} (\widetilde{\gamma}_1) - \mathcal{C} (\widetilde{\gamma}_2) -\mathcal{C} (\gamma_1) + \mathcal{C} (\gamma_2)  } $.

      \end{enumerate}
  \item $\gamma_1 \leq \gamma_2$ and $\widetilde{\gamma}_2 > \widetilde{\gamma}_1$: This is similar to case 3).
      \begin{enumerate}
      \item $\mathcal{C} (\gamma_1) \mathcal{C} (\gamma_2) > \mathcal{C} (\widetilde{\gamma}_1) \mathcal{C} (\widetilde{\gamma}_2)$: The optimal value of Problem (P1.2) is given by (\ref{DF_p12_case1});
      \item $\mathcal{C} (\gamma_1) \mathcal{C} (\gamma_2) \leq \mathcal{C} (\widetilde{\gamma}_1) \mathcal{C} (\widetilde{\gamma}_2)$: The optimal value of Problem (P1.2) is given by (\ref{DF_rate_case22}).
      \end{enumerate}
\end{enumerate}

From the above analysis, it is observed that under the same channel conditions, at most one between $R_{12}^*$ in Problem (P1.1) and $R_{21}^*$ in Problem (P1.2) can be non-zero. Thus, the optimal value of Problem (P1) is achieved by one of the Problems (P1.1) and (P1.2) with non-zero $R_{i,3-i}^*$ (if there is), since these constant terms in (\ref{DF_casei}) and (\ref{DF_caseiii}) (same for (\ref{DF_caseii}) and (\ref{DF_caseiiii})) are identical; for the case that both of $R_{i,3-i}^*$'s are zero, Problems (P1.1) and (P1.2) render the same optimal value, which is the same as that of Problem (P1). Therefore, we could obtain the DF rate in closed-form under different channel conditions, which is summarized in Table \ref{which_conf_link}. Moreover, it is worth noting that under arbitrary channel conditions, at most one between $R_{12}^*$ and $R_{21}^*$ is positive, which is coherent with the result in Proposition \ref{DF_one_side} and indicates which one of the conferencing links should be used; furthermore, for some cases, both $R_{12}^*$ and $R_{21}^*$ are zero, which means under these channel conditions, relay conferencing is useless. In Table \ref{which_conf_link}, we also summarize which conferencing link should be used to deploy one-side relay conferencing and when relay conferencing cannot improve the DF rate.

\begin{Remark}
From the above analysis, we observe that for the symmetric channel case, i.e., $\gamma_1 = \gamma_2$ and $\widetilde{\gamma}_1 = \widetilde{\gamma}_2$, relay conferencing cannot improve the DF rate with the alternative relaying scheme, which is not true for the simultaneous relaying scheme \cite{chuan2}.
\end{Remark}

\subsubsection{Asymptotic Performance}

From \cite{xue}, we know that for the diamond relay channel without relay conferencing, the DF scheme achieves the capacity upper bound under arbitrary channel conditions. In the following, we show an asymptotic capacity-achieving result for the considered channel in this paper.
\begin{Proposition} \label{DF_schemeI_asym}
\emph{With conferencing strategy I and arbitrary channel coefficients, the DF relaying scheme achieves the capacity upper bound given in (\ref{upper_bound_number}) asymptotically as the conferencing link rates go to infinity.}
\end{Proposition}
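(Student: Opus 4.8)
The plan is to take the limit $C_{12},C_{21}\to\infty$ on both sides and show that the DF value of Problem (P1) converges to the same number as the upper bound in (\ref{upper_bound_number}). First I would identify the limiting bound. For any fixed $(\lambda_1,\lambda_2)$ the third and fourth terms inside the $\min$ of (\ref{upper_bound_number}) grow without bound as the conferencing rates diverge (at least one of $\lambda_1 C_{21}$, $\lambda_2 C_{12}$ carries a strictly positive weight), so they drop out of the minimization and
\begin{align}
C_{\text{upper}} \longrightarrow C_{\text{upper}}^{\infty} := \max_{\lambda_1 + \lambda_2 = 1} \min \left\{ \lambda_1 \mathcal{C}(\gamma_1) + \lambda_2 \mathcal{C}(\gamma_2),\ \lambda_1 \mathcal{C}(\widetilde{\gamma}_2) + \lambda_2 \mathcal{C}(\widetilde{\gamma}_1) \right\}. \nonumber
\end{align}
Because each term is monotone in $(C_{12},C_{21})$ and $\{\lambda_1+\lambda_2=1,\lambda_i\ge 0\}$ is compact, interchanging the limit with the outer maximization is routine (a monotone-convergence plus squeeze argument at the maximizing $\lambda$).

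For the converse direction I would show $R_{\text{DF}}\le C_{\text{upper}}^{\infty}$ for every choice of conferencing rates, using only the non-conferencing constraints of (P1). Summing $R_{i,i}+R_{i,3-i}\le\lambda_i\mathcal{C}(\gamma_i)$ over $i=1,2$ gives $R_{11}+R_{12}+R_{21}+R_{22}\le\lambda_1\mathcal{C}(\gamma_1)+\lambda_2\mathcal{C}(\gamma_2)$, and summing $R_{3-i,3-i}+R_{i,3-i}\le\lambda_i\mathcal{C}(\widetilde{\gamma}_{3-i})$ over $i=1,2$ gives $R_{11}+R_{12}+R_{21}+R_{22}\le\lambda_1\mathcal{C}(\widetilde{\gamma}_2)+\lambda_2\mathcal{C}(\widetilde{\gamma}_1)$. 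Maximizing the smaller of the two right-hand sides over $(\lambda_1,\lambda_2)$ yields exactly $C_{\text{upper}}^{\infty}$.

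For achievability I would first note that as $C_{12},C_{21}\to\infty$ the conferencing constraints $R_{3-i,i}\le\lambda_i C_{3-i,i}$ become inactive, since each $R_{ij}$ is already bounded by a finite link capacity (e.g. $R_{12}\le R_{11}+R_{12}\le\lambda_1\mathcal{C}(\gamma_1)$); hence the limiting problem is the linear program of maximizing $R_{11}+R_{12}+R_{21}+R_{22}$ subject only to the four pairwise constraints above. The key observation is that this is a $2\times2$ transportation problem: arranging the rates as
\begin{align}
\begin{pmatrix} R_{11} & R_{12} \\ R_{21} & R_{22} \end{pmatrix}, \nonumber
\end{align}
the two $\mathcal{C}(\gamma_i)$ constraints are row-sum caps $\lambda_1\mathcal{C}(\gamma_1),\lambda_2\mathcal{C}(\gamma_2)$ and the two $\mathcal{C}(\widetilde{\gamma}_i)$ constraints are column-sum caps $\lambda_2\mathcal{C}(\widetilde{\gamma}_1),\lambda_1\mathcal{C}(\widetilde{\gamma}_2)$. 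The maximum total mass of a nonnegative matrix with prescribed row- and column-sum upper bounds equals the smaller of the total row cap and the total column cap, attained by an explicit feasible fill; for fixed $\lambda$ this value is precisely $\min\{\lambda_1\mathcal{C}(\gamma_1)+\lambda_2\mathcal{C}(\gamma_2),\lambda_1\mathcal{C}(\widetilde{\gamma}_2)+\lambda_2\mathcal{C}(\widetilde{\gamma}_1)\}$. Taking the maximum over $\lambda$ matches $C_{\text{upper}}^{\infty}$, closing both inequalities.

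The step I expect to need the most care is the last one: confirming that at the maximizing $\lambda^{\ast}$ the transportation optimum is realized by a point genuinely feasible for (P1) as the conferencing rates grow, i.e. that the required $R_{12},R_{21}$ eventually stay below $\lambda_2 C_{12},\lambda_1 C_{21}$. When $\lambda^{\ast}$ has a zero component this forces a vanishing conferencing budget on one relay and needs a short separate check; however, by Proposition \ref{DF_one_side} one of $R_{12},R_{21}$ may always be taken to be zero, which removes the remaining edge cases and lets the other variable be accommodated once the corresponding $C$ is large enough.
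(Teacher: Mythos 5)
Your proposal is correct in substance but follows a genuinely different route from the paper. The paper's proof is two lines: it observes that the last two terms of (\ref{upper_bound_number}) diverge, and then asserts that setting $R_{12}=R_{21}=0$ in (P1) already gives $R_{\text{DF}} \ge C_{\text{upper}}^{\infty}$, i.e., that the no-conferencing scheme of \cite{xue} suffices. Your transportation-problem argument does strictly more work, and for good reason: with $R_{12}=R_{21}=0$ the LP value at a given $\lambda$ is only $\min\{\lambda_1\mathcal{C}(\gamma_1),\lambda_2\mathcal{C}(\widetilde{\gamma}_1)\}+\min\{\lambda_2\mathcal{C}(\gamma_2),\lambda_1\mathcal{C}(\widetilde{\gamma}_2)\}$, which \cite{xue} matches to the \emph{four-term} cut-set bound with $C_{12}=C_{21}=0$ at the optimizing $\lambda$ --- a quantity that is in general strictly smaller than $C_{\text{upper}}^{\infty}$ (e.g.\ $\mathcal{C}(\gamma_1)=\mathcal{C}(\widetilde{\gamma}_2)=2$, $\mathcal{C}(\gamma_2)=\mathcal{C}(\widetilde{\gamma}_1)=1$ gives $4/3$ versus $C_{\text{upper}}^{\infty}=2$). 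Your max-flow/min-cut identity for the $2\times 2$ transportation LP is exactly what is needed to show that, once the conferencing constraints are relaxed, the value at \emph{every} $\lambda$ equals $\min\{\lambda_1\mathcal{C}(\gamma_1)+\lambda_2\mathcal{C}(\gamma_2),\lambda_2\mathcal{C}(\widetilde{\gamma}_1)+\lambda_1\mathcal{C}(\widetilde{\gamma}_2)\}$; combined with your converse summation, this closes the argument and makes explicit that the conferencing variables are essential to reach the limiting bound.

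One loose end: the appeal to Proposition \ref{DF_one_side} does not dispose of the boundary case. If the maximizing $\lambda^{\ast}$ has, say, $\lambda_2^{\ast}=0$ (which occurs, e.g., when $\gamma_1>\gamma_2$ and $\widetilde{\gamma}_2>\widetilde{\gamma}_1$), the transportation optimum concentrates all mass on $R_{12}$, so $R_{21}=0$ is already satisfied; the difficulty is that the budget $\lambda_2^{\ast}C_{12}$ for the \emph{nonzero} cross rate is identically zero no matter how large $C_{12}$ is. Proposition \ref{DF_one_side} only guarantees that one of the two cross rates may be taken to be zero; it says nothing about accommodating the other. The correct repair is a perturbation in $\lambda$: take $\lambda_2=\epsilon>0$, use continuity of the transportation value in $\lambda$ together with $\epsilon C_{12}\to\infty$, and send $\epsilon\downarrow 0$ after $C_{12}\to\infty$ (or along a suitable diagonal). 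Since the claim is asymptotic, this one-line fix completes your proof.
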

\begin{proof}
When $C_{i,3-i}$'s go to infinity, it is easy to see that the capacity upper bound given in (\ref{upper_bound_number}) is asymptotically equal to
\begin{align}
C_{\text{upper}}^{\infty} = \max_{\lambda_i} \min \left\{  \lambda_1 \mathcal{C} (\gamma_1) +  \lambda_2 \mathcal{C} (\gamma_2), \lambda_2 \mathcal{C} (\widetilde{\gamma}_1) +  \lambda_1 \mathcal{C} (\widetilde{\gamma}_2)  \right\}.
\end{align}
On the other hand, we set $R_{12} =R_{21} =0$ in (\ref{DF_rate}), and we observe that $R_{\text{DF}} \geq C_{\text{upper}}^{\infty} $. Thus, the proposition is proved.
\end{proof}

\begin{Remark} \label{DF_schemeI_no_capa}
For finite and positive $C_{i,3-i}$'s, it can be shown that the third and the fourth terms in (\ref{upper_bound_number}) are generally larger than the DF rate define in (\ref{DF_rate}), which implies why the DF relaying scheme cannot achieve this capacity upper bound under general channel conditions. To see this point, we fix $\lambda_i$, $i=1,2$, and sum the following three constraints in (\ref{DF_rate}) together: $R_{12} \leq \lambda_2 C_{12}$, $R_{22} + R_{21} \leq \lambda_2 \mathcal{C} (\gamma_2)$, and $R_{11} + R_{21} \leq \lambda_2 \mathcal{C} (\widetilde{\gamma}_1)$, which leads to
\begin{align}
R_{\text{DF}} & \leq R_{11} + R_{12} + 2 R_{21} +R_{22} \label{DF_finite_conf1}\\
& = \lambda_2  \left( \mathcal{C} (\gamma_2) + \mathcal{C} (\widetilde{\gamma}_1) + C_{12} \right) \\
&\leq \lambda_1 C_{21} + \lambda_2   \left( \mathcal{C} (\gamma_2) + \mathcal{C} (\widetilde{\gamma}_1) + C_{12} \right), \label{DF_finite_conf3}
\end{align}
where these two equalities in (\ref{DF_finite_conf1}) and (\ref{DF_finite_conf3}) are achieved only when $R_{21} = C_{21} = 0$. In general, since $C_{i,3-i} > 0$, we conclude that the capacity upper bound given in (\ref{upper_bound_number}) cannot be achieved by the DF scheme.
\end{Remark}

\subsection{AF Achievable Rate}

For the AF relaying scheme, each relay first linearly combines the received signals from the source and the other relay, and then transmits the combination to the destination under an individual relay power constraint.

We assume that the conferencing links are Gaussian. For simplicity, let the input of the conferencing link at the $(3-i)$-th relay be $x_{3-i,i} = y_{3-i}$ and the link gain of each conferencing link equal to 1. Thus, the conferencing link output at the $i$-th relay is given as
\begin{align}
y_{3-i,i} = x_{3-i,i} + n_{3-i,i},
\end{align}
where $n_{3-i,i}$ is the i.i.d. CSCG noise at the $i$-th relay with a distribution $\mathcal{CN}\left( 0, \sigma_{3-i,i}^2 \right)$. With the conferencing link rate constraint, $\sigma_{3-i,i}^2$ is given as
\begin{align} \label{AF_conf_noise}
\sigma^2_{3-i,i} = \frac{ \gamma_{3-i} +1}{2^{C_{3-i,i}/2}-1}, i=1,2.
\end{align}

After the relay conferencing, each relay combines the two received signals from the source and the other relay as $t_i = a_{ii}y_i + a_{3-i,i} y_{3-i,i}$, where $a_{ii}$ and $a_{3-i,i}$ are the complex combining parameters, satisfying the following average transmit power constraint
\begin{align} \label{AF_power_constraints}
  \mathbb{E} \left( |t_i|^2 \right)&  = |a_{ii}|^2 \left( \gamma_i +1 \right) +|a_{3-i,i}|^2 \left( \gamma_{3-i} + 1 +\sigma^2_{3-i,i} \right) \leq 1,~i=1,2.
\end{align}

At the destination, we apply a sequential decoding process: Assume that at the $k$-th time slot, the previous $k-1$ source messages have already been successfully decoded; decode the $k$-th source message based on the received signals at the $k$-th and the $(k+1)$-th time slots, by treating the $(k+1)$-th message at the $(k+1)$-th time slot as noise. As such, the AF rate for each source message is given as
\begin{align} \label{AF_rate_original}
R_{i} & = \frac{1}{2} \mathcal{C} \left( \frac{ |a_{ii}|^2 \gamma_i \widetilde{\gamma}_i}{1  +  |a_{ii}|^2 \widetilde{\gamma}_i}  +  \frac{|a_{i,3-i}|^2 \gamma_i \widetilde{\gamma}_{3-i} }{ |a_{i,3-i}|^2 \widetilde{\gamma}_{3-i} \left( 1 + \sigma_{i,3-i}^2 \right) + |a_{3-i,3-i}|^2  \widetilde{\gamma}_{3-i} (\gamma_{3-i}+ 1) +1 }  \right),
\end{align}
where $R_1$ and $R_2$ denote the rates for the source messages in odd and even time slots, respectively. Thus, the AF rate is given as
\begin{align} \label{AF_rate}
R_{\text{AF}} = \max_{(\ref{AF_power_constraints})}R_1 + R_2.
\end{align}
Then, we have the following result for the convexity of Problem (\ref{AF_rate}).
\begin{Proposition} \label{af_convex_prop}
The AF rate maximization problem in (\ref{AF_rate}) is concave over the combining parameters $|a_{i,j}|^2$, $i,j \in \{1,2\}$.
\end{Proposition}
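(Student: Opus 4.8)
The plan is to read (\ref{AF_rate}) as the maximization of a sum of two terms of the form $\tfrac{1}{2}\mathcal{C}(\cdot)$ over the nonnegative variables $b_{ij}:=|a_{ij}|^2$, and to reduce the statement to concavity of the objective on a convex feasible set. First I would note that each relay power constraint in (\ref{AF_power_constraints}) is \emph{linear} in the $b_{ij}$'s, since the coefficients $\gamma_i+1$ and $\gamma_{3-i}+1+\sigma_{3-i,i}^2$ are constants (with $\sigma_{3-i,i}^2$ fixed by (\ref{AF_conf_noise})); together with $b_{ij}\ge 0$ the feasible region is therefore a polytope, hence convex. Since a nonnegative sum of concave functions is concave and $\mathcal{C}(x)=\log(1+x)$ is concave and strictly increasing, it then suffices to prove that $R_1+R_2$ is concave in $(b_{11},b_{12},b_{21},b_{22})$.

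Next I would isolate the parts where the standard concave--nondecreasing composition rule applies. For a fixed interference gain, the first term of each SINR in (\ref{AF_rate_original}), namely $\frac{b_{ii}\gamma_i\widetilde{\gamma}_i}{1+b_{ii}\widetilde{\gamma}_i}$, has the form $cx/(1+dx)$ and is concave increasing in $b_{ii}$; and the second term, viewed as a function of the desired-signal gain $b_{i,3-i}$ alone, has the form $cx/(dx+e)$ and is again concave increasing. Hence the argument of each $\mathcal{C}(\cdot)$ is concave in the signal gains $(b_{ii},b_{i,3-i})$, so by composition each $R_i$ is concave when restricted to its own signal variables. What remains is the dependence on the interference gain $b_{3-i,3-i}$ entering the denominator of the second SINR term of $R_i$, and this is where the argument becomes delicate.

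The main obstacle is exactly this interference coupling. Regarded as a linear-fractional function, the second SINR term $\frac{b_{i,3-i}\gamma_i\widetilde{\gamma}_{3-i}}{b_{i,3-i}\widetilde{\gamma}_{3-i}(1+\sigma_{i,3-i}^2)+b_{3-i,3-i}\widetilde{\gamma}_{3-i}(\gamma_{3-i}+1)+1}$ has an indefinite Hessian in $(b_{i,3-i},b_{3-i,3-i})$, and in fact $R_i$ taken alone is \emph{convex} along the interference coordinate $b_{3-i,3-i}$. Consequently one cannot prove concavity of $R_1$ and $R_2$ separately, and any purely termwise composition argument must fail. The resolution I would pursue exploits the dual role of each shared gain: $b_{11}$ and $b_{22}$ each appear as a desired-signal gain in one rate (a negatively curved contribution) and as an interference gain in the other (a positively curved contribution). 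The statement should follow by showing that in the sum $R_1+R_2$ the signal curvature always dominates the interference curvature.

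Concretely, I would compute the full $4\times 4$ Hessian of $R_1+R_2$ via $\nabla^2\log(1+g)=\frac{1}{1+g}\nabla^2 g-\frac{1}{(1+g)^2}\nabla g\,(\nabla g)^\top$, and organize the negative-semidefiniteness test around the structural sparsity, noting that the $(b_{12},b_{21})$ entry vanishes because no single rate involves both private gains. The hard step is the coupling of a shared gain to itself across the two rates: for example verifying that the positive curvature $\partial^2 R_1/\partial b_{22}^2>0$ is more than offset by $\partial^2 R_2/\partial b_{22}^2<0$ together with the attendant cross terms. I expect this to hinge on the fact that the same physical channel $\widetilde{\gamma}_{3-i}$ governs both the interference coefficient in $R_i$ and the signal coefficient in $R_{3-i}$, so the competing second derivatives can be compared in closed form. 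Bounding the interference-induced positive curvature by the signal-induced negative curvature is where essentially all the algebra sits, whereas the reduction to a concave objective and the convexity of the constraint set are routine.
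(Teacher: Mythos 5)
Your reduction to a convex feasible set and your treatment of the first SINR term are fine, and your central observation --- that the second SINR fraction, written as $z=\frac{x}{x+ay+b}$ with $x$ the desired-signal gain and $y$ the interference gain, is \emph{not} jointly concave because it is convex along the $y$ coordinate --- is correct. But the proposal stops exactly where the proof has to happen. The claim that in the sum $R_1+R_2$ the signal-induced negative curvature in each shared variable $b_{ii}$ always dominates the interference-induced positive curvature is only announced as an expectation (``I expect this to hinge on\dots''), never established. Nothing forces this cancellation a priori: the positive curvature $\partial^2R_i/\partial b_{3-i,3-i}^2$ is weighted by the other stream's gain $b_{i,3-i}$ and by $(\gamma_{3-i}+1)^2$, while the offsetting negative curvature $\partial^2R_{3-i}/\partial b_{3-i,3-i}^2$ depends on $\widetilde{\gamma}_{3-i}$ and the total SINR of $R_{3-i}$, so the comparison genuinely depends on the channel parameters and on the operating point; one would have to verify negative semidefiniteness of the full $4\times4$ Hessian, including the cross terms you mention, at least over the polytope defined by (\ref{AF_power_constraints}). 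Until that inequality is proved (or a counterexample found), what you have is a plan, not a proof.

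For comparison: the paper's own appendix takes precisely the termwise route you argue ``must fail.'' It asserts that $z=x/(x+ay+b)$ is jointly concave by computing its Hessian $\bm{H}$ and concluding negative semidefiniteness from $|\bm{H}|<0$. That inference is invalid --- a symmetric $2\times 2$ matrix with strictly negative determinant is indefinite, and indeed the $(2,2)$ entry $2a^2x/(x+ay+b)^3$ is positive, which is exactly the convexity in the interference coordinate you flagged. So your diagnosis of the obstruction is sharper than the paper's argument, but neither your proposal nor the paper's proof actually closes the gap: a correct treatment would have to carry out (or refute) the curvature-domination estimate you outline, restricted to the feasible region.
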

\begin{proof}
See Appendix \ref{convex_af}.
\end{proof}

Even though the AF rate maximization problem in (\ref{AF_rate}) can be solved by numerical algorithms, e.g., the interior point method \cite{boyd}, we know little about whether two-side conferencing is necessary with general channel coefficients. To obtain some insights for the proposed conferencing scheme, we further investigate the performance of the AF scheme for the cases when the second-hop link SNR $\widetilde{\gamma}_i$ goes to infinity and zero, respectively.

\subsubsection{High SNR regime} For the case with $\widetilde{\gamma}_i \rightarrow \infty$, (\ref{AF_rate_original}) can be approximated as
\begin{align} \label{AF_rate_high_SNR}
R_{i} \approx \frac{1}{2} \mathcal{C} \left(  \gamma_i  +  \frac{|a_{i,3-i}|^2 \gamma_i}{ |a_{i,3-i}|^2  \left( 1 + \sigma_{i,3-i}^2 \right)+  |a_{3-i,3-i}|^2 \gamma_{3-i}}  \right).
\end{align}

\begin{Remark}
In general, it is still difficult to derive the closed-form solution for Problem (\ref{AF_rate}) with (\ref{AF_rate_high_SNR}). However, it is worth noting that the term $\gamma_i$ in $\mathcal{C}(\cdot)$ of (\ref{AF_rate_high_SNR}) equals the received SNR for the case without relay conferencing when $\widetilde{\gamma}_i \rightarrow \infty$; moreover, the second term in $\mathcal{C}(\cdot)$ of (\ref{AF_rate_high_SNR}) is the gain from relay conferencing. For some special cases, i.e., where $\gamma_1 = \gamma_2$ and we choose $C_{i,3-i}$ such that $1 + \sigma_{i,3-i}^2 = \gamma_i$, it can be checked that the maximum AF rate is achieved at $|a_{i,j}|^2 = \frac{1}{2\gamma_1}$, $i,j \in \{ 1,2\}$ (note that $R_i$ is concave over $|a_{i,j}|^2$'s). Thus, for both $R_1$ and $R_2$, the conferencing gains are non-zero, which implies that two-side conferencing is necessary.
\end{Remark}

\subsubsection{Low SNR regime} For the case with $\widetilde{\gamma}_i \rightarrow 0$, (\ref{AF_rate_original}) can be approximated as
\begin{align}
R_{i} \approx \frac{1}{2} \left( |a_{ii}|^2 \gamma_i \widetilde{\gamma}_i  +  |a_{i,3-i}|^2 \gamma_i \widetilde{\gamma}_{3-i}  \right).
\end{align}
Then, the AF rate maximization problem can be recast as
\begin{align}
(\text{P}2)~\max_{(\ref{AF_power_constraints})} ~~\frac{1}{2} \left( |a_{11}|^2 \gamma_1 \widetilde{\gamma}_1  +  |a_{12}|^2 \gamma_1 \widetilde{\gamma}_{2} + |a_{22}|^2 \gamma_2 \widetilde{\gamma}_2  +  |a_{2,1}|^2 \gamma_2 \widetilde{\gamma}_{1}  \right),
\end{align}
which is a LP problem. It can be shown that Problem (P2) can be decomposed into two subproblems, and its optimal point can be constructed from those of the following two problems.
\begin{align}
(\text{P}2.1)~& \max ~~\frac{1}{2} \left( |a_{11}|^2 \gamma_1 \widetilde{\gamma}_1  +  |a_{2,1}|^2 \gamma_2 \widetilde{\gamma}_{1}  \right) \\
\text{s. t.}~ &~~ |a_{11}|^2 \left( \gamma_1 +1 \right) +|a_{21}|^2 \left( \gamma_{2} + 1 +\sigma^2_{21} \right) \leq 1,
\end{align}
and
\begin{align}
(\text{P}2.2)~& \max ~~\frac{1}{2} \left(   |a_{12}|^2 \gamma_1 \widetilde{\gamma}_{2} + |a_{22}|^2 \gamma_2 \widetilde{\gamma}_2    \right) \\
\text{s. t.} ~&~~ |a_{22}|^2 \left( \gamma_2 +1 \right) +|a_{12}|^2 \left( \gamma_{1} + 1 +\sigma^2_{12} \right) \leq 1.
\end{align}
It is easy to check that one of the following two points is optimal for Problem (P2.1): $\left( |a_{11}|^2, |a_{21}|^2 \right) = \left( \frac{1}{ \gamma_1 +1}, 0  \right)$ and $\left( |a_{11}|^2, |a_{21}|^2 \right) = \left(  0 , \frac{1}{\gamma_{2} + 1 +\sigma^2_{21} } \right)$, and thus its optimal value is $\widetilde{\gamma}_1 \cdot \max \left\{ \frac{\gamma_1}{\gamma_1 + 1}, \frac{\gamma_2}{\gamma_2 + 1 +\sigma_{21}^2} \right\}$. Similarly, for Problem (P2.2), its optimal value $\widetilde{\gamma}_2 \cdot  \max \left\{ \frac{\gamma_2}{\gamma_2 + 1}, \frac{\gamma_1}{\gamma_1 + 1 +\sigma_{12}^2} \right\}$ is achieved by either $\left( |a_{22}|^2, |a_{12}|^2 \right) = \left( \frac{1}{ \gamma_2 +1}, 0  \right)$ or $\left( |a_{22}|^2, |a_{12}|^2 \right) = \left(  0 , \frac{1}{\gamma_{1} + 1 +\sigma^2_{12} } \right)$. By considering different combinations of the possible optimal points of Problems (P2.1) and (P2.2), we obtain the optimal solution $|a_{i,j}^*|^2$, $i,j \in \{ 1,2\}$, of Problem (P2) as follows.
\begin{enumerate}
  \item For the case with $\frac{\gamma_1}{\gamma_1 + 1} \geq \frac{\gamma_2}{\gamma_2 + 1 +\sigma_{21}^2}$ and $ \frac{\gamma_2}{\gamma_2 + 1} \geq \frac{\gamma_1}{\gamma_1 + 1 +\sigma_{12}^2} $, we have $|a_{11}^*|^2= \frac{1}{\gamma_1 + 1}$, $|a_{22}^*|^2= \frac{1}{\gamma_2 + 1}$, and $|a_{12}^*|^2 = |a_{21}^*|^2 =0$; for this case, relay conferencing cannot improve the AF rate.
  \item For the case with $\frac{\gamma_1}{\gamma_1 + 1} < \frac{\gamma_2}{\gamma_2 + 1 +\sigma_{21}^2}$ and $\frac{\gamma_2}{\gamma_2 + 1} < \frac{\gamma_1}{\gamma_1 + 1 +\sigma_{12}^2}$, we claim that it cannot happen. To see this point, consider the case with $C_{12} \rightarrow \infty$, i.e., $\sigma_{12}^2 \rightarrow 0$, and we obtain that $\frac{\gamma_1}{\gamma_1 + 1} < \frac{\gamma_2}{\gamma_2 + 1 +\sigma_{21}^2} < \frac{\gamma_2}{\gamma_2 + 1}$. Applying a similar argument for the other inequality, it follows that $\frac{\gamma_2}{\gamma_2 + 1} < \frac{\gamma_1}{\gamma_1 + 1 }$, which contradicts with the previous inequality. As such, it is concluded that this case cannot happen.
  \item For the case with $\frac{\gamma_1}{\gamma_1 + 1} < \frac{\gamma_2}{\gamma_2 + 1 +\sigma_{21}^2}$ and $\frac{\gamma_2}{\gamma_2 + 1} \geq \frac{\gamma_1}{\gamma_1 + 1 +\sigma_{12}^2}$, we have $|a_{11}^*|^2=|a_{12}^*|^2=0$, $|a_{22}^*|^2=\frac{1}{\gamma_2 + 1} $, and $|a_{21}^*|^2= \frac{1}{\gamma_{2} + 1 +\sigma^2_{21} }$; for this case, the source does not need to send information to relay 1, and both of the two relays forward the signals received at relay 2 to the destination.
  \item For the case with $\frac{\gamma_1}{\gamma_1 + 1} \geq \frac{\gamma_2}{\gamma_2 + 1 +\sigma_{21}^2}$ and $\frac{\gamma_2}{\gamma_2 + 1} < \frac{\gamma_1}{\gamma_1 + 1 +\sigma_{12}^2}$, we have $|a_{11}^*|^2 = \frac{1}{\gamma_1 + 1}$, $|a_{12}^*|^2=\frac{1}{\gamma_{1} + 1 +\sigma^2_{12} }$, and $|a_{22}^*|^2=|a_{21}^*|^2=0$; this case leads to results opposite to case 3).
\end{enumerate}

\begin{Remark}
From the above analysis, we conclude that for the case with $\widetilde{\gamma}_i \rightarrow 0$, after obtaining the two signals from the source and its counterpart relay, each relay should only forward the one with a higher SNR and discard the other one. Moreover, cases 3 and 4 suggest that in the low SNR regime, increasing power gain is more critical than increasing multiplex gain for the AF relaying scheme. This is opposite to the result in the high SNR regime, where the alternative relaying scheme is optimal in the sense of achieving the full multiplexing gain as stated in the introduction part.
\end{Remark}

\section{Conferencing Strategy II}

In this section, we consider conferencing strategy II, and derive the capacity upper bound and the DF achievable rate. We will show that the DF rate will be greatly improved compared with that of strategy I. As we discussed before, we omit the analysis for the AF relaying scheme, since its result is only different by some constant factors from that of strategy I.

\subsection{Capacity Upper bound}

Note that for this case, the conferencing links can be fully utilized, and thus there will be no penalty terms $\lambda_i$'s over the conferencing link rates $C_{i,3-i}$'s. As such, we obtain the following result for the capacity upper bound, which is similar to Theorem \ref{upper_bound}.

\begin{Theorem} \label{upper_bound_II}
\emph{Under conferencing strategy II, the capacity upper bound for the alternative relaying diamond channel with conferencing links is given as}
\begin{align} \label{upper_bound_number_delayII}
C_{\text{upper}} &= \max_{\lambda_1 + \lambda_2 =1 } \min \left\{  \lambda_1 \mathcal{C} (\gamma_1) +  \lambda_2 \mathcal{C} (\gamma_2), \lambda_2 \mathcal{C} (\widetilde{\gamma}_1) +  \lambda_1 \mathcal{C} (\widetilde{\gamma}_2),  \lambda_2 \left(  \mathcal{C} (\gamma_2) + \mathcal{C} (\widetilde{\gamma}_1) \right) + C_{12} +  C_{21} , \right. \nonumber \\
&~~~~~~~~~~~~~~~~~~~~~~~~~~~~~~~~\left. \lambda_1 \left(  \mathcal{C} (\gamma_1) + \mathcal{C} (\widetilde{\gamma}_2)   \right) +  C_{12}  +C_{21}  \right\}.
\end{align}
\end{Theorem}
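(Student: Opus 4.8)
The plan is to establish (\ref{upper_bound_number_delayII}) by the network cut-set bound \cite{cover}, specialized to the half-duplex alternative-relaying schedule equipped with the conferencing timetable of strategy II, in direct parallel to the proof of Theorem \ref{upper_bound}. Fixing the time-sharing parameters $\lambda_1,\lambda_2$ with $\lambda_1+\lambda_2=1$, I would enumerate the four cuts separating the source from the destination and identify the physical and conferencing links that are active in each slot. Because the source-to-relay and relay-to-destination wireless links are scheduled exactly as in strategy I, their contributions are untouched; the only modification lies in how the two rate-limited conferencing pipes enter the bound.

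First, the broadcast cut $\{S\}$ versus $\{\text{R1},\text{R2},\text{D}\}$ collects the source outputs, active for a fraction $\lambda_1$ of the time toward relay~1 and $\lambda_2$ toward relay~2, yielding the first term $\lambda_1\mathcal{C}(\gamma_1)+\lambda_2\mathcal{C}(\gamma_2)$; dually, the multiple-access cut $\{S,\text{R1},\text{R2}\}$ versus $\{\text{D}\}$ collects the two relay-to-destination links and gives $\lambda_2\mathcal{C}(\widetilde{\gamma}_1)+\lambda_1\mathcal{C}(\widetilde{\gamma}_2)$. These two terms coincide with those in (\ref{upper_bound_number}), since neither the source broadcast nor the destination reception is affected by the conferencing schedule.

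The two relay-separating cuts, $\{S,\text{R1}\}$ versus $\{\text{R2},\text{D}\}$ and $\{S,\text{R2}\}$ versus $\{\text{R1},\text{D}\}$, produce the third and fourth terms and are precisely where the departure from strategy I occurs. For $\{S,\text{R1}\}$ versus $\{\text{R2},\text{D}\}$ only the even slots carry forward information—in the odd slots the destination hears only relay~2, which lies on its own side of the cut, so conditioning on $X_{\text{R2}}$ annihilates that contribution—giving the wireless part $\lambda_2\bigl(\mathcal{C}(\gamma_2)+\mathcal{C}(\widetilde{\gamma}_1)\bigr)$. The crucial observation is that in strategy~II each conferencing codeword is spread across both slots of a block, so both links are busy in every slot and incur no $\lambda$ weighting, and the bidirectional cooperation of the two relays straddling the cut credits the full conferencing budget $C_{12}+C_{21}$, exactly as in the sum-rate cut-set bound for the conferencing multiple-access channel \cite{cmac}. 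This yields the third term $\lambda_2\bigl(\mathcal{C}(\gamma_2)+\mathcal{C}(\widetilde{\gamma}_1)\bigr)+C_{12}+C_{21}$, and a symmetric argument over the odd slots gives the fourth. Taking the minimum over the four cuts and the maximum over $\lambda_1,\lambda_2$ produces (\ref{upper_bound_number_delayII}).

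The main obstacle, and the step I would treat most carefully, is justifying that the conferencing capacities enter as the full, unweighted sum $C_{12}+C_{21}$ rather than with the time-fraction penalties of strategy~I. This requires (i) a genie-aided Fano argument over $N\to\infty$ blocks, so that the two-slot spreading of each conferencing codeword and the associated extra decoding delay wash out of the average rate, and (ii) a careful accounting, inside the conditional mutual information $I(X_{\mathcal S};Y_{\mathcal S^c}\,|\,X_{\mathcal S^c})$ across each relay-separating cut, of the noiseless conferencing bit-pipes connecting the two relays on opposite sides of the cut, mirroring the Willems-type conferencing bound of \cite{cmac}. The remaining manipulations are routine and parallel \cite{xue}, so they may be summarized rather than carried out in full.
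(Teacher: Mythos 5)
Your proposal follows essentially the same route as the paper: the paper gives no explicit proof of this theorem, deriving it as the cut-set bound of Theorem \ref{upper_bound} (itself justified only by citing \cite{cover} and \cite{xue}) with the sole modification that strategy II fully utilizes the conferencing links, so the $\lambda_i$ penalties on $C_{12}$ and $C_{21}$ disappear. Your cut enumeration and your identification of the unweighted $C_{12}+C_{21}$ term (via the Willems-type conferencing argument) as the only point requiring care are consistent with, and more detailed than, the paper's own treatment.
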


\subsection{DF Achievable Rate}

The DF coding scheme under conferencing strategy II is similar to that for scheme I, and thus we only describe its differences from the previous scheme. For the $k$-th source message (sending to the $i$-th relay), it contains two sub-messages $w_k^i$ and $w_k^{3-i}$ via superposition coding. After relay $i$ decodes them, it sends $w_k^i$ to the destination in the $(k+1)$-th time slot, and $w_k^{3-i}$ to the other relay in the $(k+1)$-th and the $(k+2)$-th time slots via the conferencing link. As such, the rate of message $w_k^{3-i}$ is no longer subject to the conferencing link rates constraints, i.e., $R_{i,3-i} \leq  C_{i,3-i},~i=1,2$ is not required. In the $(k+3)$-th time slot, relay $3-i$ sends $w_k^{3-i}$ to the destination together with the message $w_{k+3}^{3-i}$. Accordingly, we have the following result for the DF rate.

\begin{Theorem}
\emph{Under conferencing strategy II, the DF achievable rate for the alternative relaying diamond channel with conferencing links is given as}
\begin{align}
\text{P3}:~& \max ~~R_{\text{DF}}  = R_{11} +R_{12} +R_{21} +R_{22} \label{DF_rate_II} \\
\text{s. t.}~~ & R_{3-i,i} \leq  C_{3-i,i},~i=1,2, \label{DF_rate_II_conference_constr} \\
& R_{i,i} + R_{i,3-i} \leq \lambda_i \mathcal{C} (\gamma_i),~i=1,2,  \label{DF_rate_II_constr1} \\
& R_{3-i,3-i} + R_{i,3-i} \leq \lambda_i \mathcal{C} (\widetilde{\gamma}_{3-i}) ,~i=1,2, \label{DF_rate_II_constr2} \\
& R_{i,j} \geq 0, i,j \in \{1,2\},~ \lambda_1 + \lambda_2 =1,~i=1,2,
\end{align}
\emph{where $R_{ii}$ and $R_{i,3-i}$, $i=1,2$, are defined the same as those in Problem (P1).}
\end{Theorem}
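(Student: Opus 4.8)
The plan is to prove achievability by reusing, almost verbatim, the random-coding construction of the DF rate theorem for conferencing strategy~I (Problem~(P1)) in Appendix~\ref{appen_alternative_DF}; only the conferencing constraint changes, so I would isolate that single ingredient and re-derive it. As in the strategy-I proof, I would employ superposition coding both at the source (layering $w_k^i$ and $w_k^{3-i}$ in a type-$i$ slot) and at each relay (layering its own directly-decoded stream and the stream it has received over the conferencing link). Since every link in the network is orthogonal in time or frequency, no block-Markov coherence or joint codebook between the two relays is needed, and each of the four sub-streams can be analysed as an independent point-to-point code.

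First I would dispatch the constraints inherited unchanged from strategy~I. In a type-$i$ slot of duration $\lambda_i$ the source superimposes $w_k^i$ and $w_k^{3-i}$, and relay~$i$ decodes both layers, which by the standard two-layer decoding bound gives $R_{i,i}+R_{i,3-i}\le\lambda_i\mathcal{C}(\gamma_i)$, reproducing~(\ref{DF_rate_II_constr1}). Symmetrically, when relay~$3-i$ transmits it superimposes its own stream (rate $R_{3-i,3-i}$) and the conferenced stream it forwards on behalf of relay~$i$ (rate $R_{i,3-i}$); requiring the destination to decode both layers over the link of SNR $\widetilde{\gamma}_{3-i}$ during the fraction $\lambda_i$ of time yields $R_{3-i,3-i}+R_{i,3-i}\le\lambda_i\mathcal{C}(\widetilde{\gamma}_{3-i})$, i.e.~(\ref{DF_rate_II_constr2}). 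Neither derivation depends on how the conferencing message is spread in time, so these steps are routine.

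The only genuinely new step is the conferencing constraint~(\ref{DF_rate_II_conference_constr}), and here I would make the timing argument explicit. In strategy~II relay~$i$ spreads each decoded message $w_k^{3-i}$ across the two successive slots $k+1$ and $k+2$, whose durations sum to exactly one block, $\lambda_{3-i}+\lambda_i=1$, in contrast to strategy~I where this direction of the link is active only during the single subsequent slot, a fraction $\lambda_{3-i}$ of each block. Because a fresh message enters the relay-$i$-to-relay-$(3-i)$ link once per block and each occupies exactly one block of link time, the successive transmissions tile the time axis with no idle slots (pipelining). Counting over $N$ blocks, the link then carries $N R_{i,3-i}$ message bits using its full $N\,C_{i,3-i}$-bit budget, so letting $N\to\infty$ gives $R_{i,3-i}\le C_{i,3-i}$ with no $\lambda_i$ penalty; this is precisely the source of the rate improvement over strategy~I. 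I would also check that relay~$3-i$ holds the complete message by the end of slot $k+2$ and only needs it one block later, so the additional decoding delay is finite and, by the same $N\to\infty$ averaging invoked in Section~II, does not reduce the achievable rate.

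Finally, to conclude that the optimal value of~(P3) equals this achievable rate, I would observe that each inequality is individually necessary for the stated architecture: violating a source or destination inequality forces a decoding failure at the corresponding node, while violating~(\ref{DF_rate_II_conference_constr}) overflows a conferencing link. The main obstacle I anticipate is the bookkeeping in the pipelining argument---verifying, against the interleaved odd/even schedule of Fig.~\ref{fig_scheduling}, that the two-slot spreading of every message leaves no idle conferencing slot and creates no collision between consecutive messages, since it is exactly this full-utilization fact that legitimizes dropping the $\lambda_i$ factor present in strategy~I.
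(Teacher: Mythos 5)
Your proposal is correct and follows essentially the same route as the paper, which justifies Problem (P3) by reusing the strategy-I codebook construction of Appendix~\ref{appen_alternative_DF} and only modifying the conferencing step: spreading each conferenced message over the two subsequent slots (which together fill one full block) pipelines the link with no idle time and replaces $R_{3-i,i}\le\lambda_i C_{3-i,i}$ by $R_{3-i,i}\le C_{3-i,i}$, at the cost of a bounded extra decoding delay that vanishes in the average rate. Your explicit tiling/collision check for the pipelined schedule is a useful elaboration of a point the paper leaves implicit, but it is not a different argument.
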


Since Problem (P3) has a similar structure as Problem (P1), it can be shown that one-side conferencing is also optimal for the DF scheme under conferencing strategy II, and the optimal solution of Problem (P3) can be obtained by a similar routine as that in Section III, which is omitted for simplicity. Here, we first have the following proposition to show a capacity result for the DF scheme.

\begin{Proposition} \label{DF_schemeII_finite_prop}
\emph{Under conferencing strategy II, the DF relaying scheme achieves the corresponding capacity upper bound with finite conferencing link rates, i.e., with $C_{12}+ C_{21}$ larger than or equal to the values summarized in Table \ref{upper_cij}.
}
\end{Proposition}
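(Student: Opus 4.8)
The plan is to prove the two inequalities $R_{\text{DF}}\le C_{\text{upper}}$ and $R_{\text{DF}}\ge C_{\text{upper}}$ separately, where the second one is to hold only once $C_{12}+C_{21}$ exceeds the tabulated threshold. The converse direction is essentially free: since the DF scheme is a concrete achievable scheme and $C_{\text{upper}}$ in (\ref{upper_bound_number_delayII}) is a cut-set bound, every feasible point of Problem (P3) yields a rate no larger than $C_{\text{upper}}$. Concretely, summing (\ref{DF_rate_II_constr1}) over $i=1,2$ bounds $R_{\text{DF}}$ by the first term of (\ref{upper_bound_number_delayII}), summing (\ref{DF_rate_II_constr2}) over $i=1,2$ bounds it by the second, and combining (\ref{DF_rate_II_conference_constr}) with the two constraints touching a given relay bounds it by the third and fourth terms (using $C_{12},C_{21}\ge0$). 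So I only need to establish achievability of $C_{\text{upper}}$ for large enough conferencing.

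For achievability, the key observation is that, for fixed $(\lambda_1,\lambda_2)$, Problem (P3) is exactly a $2\times2$ transportation (max-flow) problem. Writing $a_1=\lambda_1\mathcal{C}(\gamma_1)$, $a_2=\lambda_2\mathcal{C}(\gamma_2)$ for the amounts the source delivers to relays $1,2$, and $b_1=\lambda_2\mathcal{C}(\widetilde{\gamma}_1)$, $b_2=\lambda_1\mathcal{C}(\widetilde{\gamma}_2)$ for the amounts relays $1,2$ deliver to the destination, constraints (\ref{DF_rate_II_constr1})--(\ref{DF_rate_II_constr2}) read $R_{11}+R_{12}\le a_1$, $R_{21}+R_{22}\le a_2$, $R_{11}+R_{21}\le b_1$, $R_{12}+R_{22}\le b_2$, i.e.\ the row/column capacities of a bipartite flow in which $R_{12}$ and $R_{21}$ are the two ``cross'' (conferencing) edges of capacities $C_{12}$ and $C_{21}$ from (\ref{DF_rate_II_conference_constr}). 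When these cross capacities are unbounded the maximum total flow is $\min\{a_1+a_2,\,b_1+b_2\}=\min\{A,B\}$ with $A:=\lambda_1\mathcal{C}(\gamma_1)+\lambda_2\mathcal{C}(\gamma_2)$ and $B:=\lambda_2\mathcal{C}(\widetilde{\gamma}_1)+\lambda_1\mathcal{C}(\widetilde{\gamma}_2)$, and maximizing over $\lambda$ gives the saturation value $C_{\text{upper}}^{\infty}=\max_{\lambda}\min\{A,B\}$ met in Proposition \ref{DF_schemeI_asym}.

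It then remains to find the least cross-edge capacity that keeps the max flow at $\min\{A,B\}$. Evaluating at the $\lambda^{\infty}$ attaining $C_{\text{upper}}^{\infty}$ and assuming, without loss of generality, that the source cut binds so that $A(\lambda^{\infty})\le B(\lambda^{\infty})$, a min-cut computation shows the required cross flow is $\left(a_1-b_1\right)^{+}$ on the edge $R_{12}$ when $a_1>b_1$ (and symmetrically $\left(a_2-b_2\right)^{+}$ on $R_{21}$ when $a_2>b_2$); i.e.\ a single finite link suffices, consistent with the one-side optimality inherited from the analogue of Proposition \ref{DF_one_side}. The threshold is thus finite, namely $C_{12}\ge \lambda_1^{\infty}\mathcal{C}(\gamma_1)-\lambda_2^{\infty}\mathcal{C}(\widetilde{\gamma}_1)$ in this case. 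The crucial point---and the reason strategy II beats strategy I---is that this same quantity is exactly the shortfall of the third term of (\ref{upper_bound_number_delayII}), call it $D_3$: one checks that $D_3(\lambda^{\infty})\ge C_{\text{upper}}^{\infty}$ iff $C_{12}+C_{21}\ge \lambda_1^{\infty}\mathcal{C}(\gamma_1)-\lambda_2^{\infty}\mathcal{C}(\widetilde{\gamma}_1)$, precisely because the $\lambda$-factors multiplying $C_{12},C_{21}$ are absent in (\ref{upper_bound_number_delayII}). Hence above this threshold $D_3$ and $D_4$ cease to be active, $C_{\text{upper}}=C_{\text{upper}}^{\infty}$, and the transportation construction attains it; together with $R_{\text{DF}}\le C_{\text{upper}}\le C_{\text{upper}}^{\infty}$ this forces $R_{\text{DF}}=C_{\text{upper}}$.

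The remaining work is bookkeeping: solve for $\lambda^{\infty}$ in closed form and repeat the min-cut/threshold computation under each ordering of $\gamma_1,\gamma_2$ and $\widetilde{\gamma}_1,\widetilde{\gamma}_2$ (and the sub-cases $\mathcal{C}(\gamma_1)\mathcal{C}(\gamma_2)\lessgtr\mathcal{C}(\widetilde{\gamma}_1)\mathcal{C}(\widetilde{\gamma}_2)$ already seen in Section III), thereby filling in Table \ref{upper_cij}. The main obstacle I anticipate is not any single inequality but making the link-specificity precise: the bound (\ref{upper_bound_number_delayII}) depends only on the sum $C_{12}+C_{21}$, whereas the DF flow genuinely needs capacity on one particular cross edge, so one must argue---via the one-side optimality---that the relevant single link is the one dictated by the sign of the cut imbalance and that its required rate equals the tabulated value of $C_{12}+C_{21}$.
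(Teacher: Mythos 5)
Your route is in substance the same as the paper's: Appendix \ref{DF_schemeII_finiterate} likewise reduces everything to the two--term problem $\widetilde{C}_{\text{upper}}=\max_{\lambda}\min\{A,B\}$ (in your notation $A=a_1+a_2$, $B=b_1+b_2$), computes its optimizer $\lambda^*$ case by case, and sets the threshold equal to the shortfall $\widetilde{C}_{\text{upper}}-g(\lambda^*)$ that makes the third and fourth cut-set terms of (\ref{upper_bound_number_delayII}) redundant at $\lambda^*$; since $g(\lambda^*)=\min\{a_2+b_1,\,a_1+b_2\}$, that shortfall is exactly your min-cut quantity $\max\{a_1-b_1,\,a_2-b_2\}$ in the regime where it is positive. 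Your transportation-problem view of Problem (P3) actually makes the achievability half more explicit than the paper's one-line appeal to the proof of Proposition \ref{DF_schemeII_equal_I}, and your closed-form thresholds agree with Table \ref{upper_cij}.

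The obstacle you flag at the end, however, is not bookkeeping---it is a genuine gap, and the paper does not close it either. For fixed $\lambda$ the max flow of Problem (P3) is $\min\{a_1+a_2,\ b_1+b_2,\ a_2+b_1+C_{12},\ a_1+b_2+C_{21}\}$: each cross cut of the \emph{achievable} scheme contains only one of $C_{12},C_{21}$, whereas the corresponding terms of the upper bound (\ref{upper_bound_number_delayII}) contain the sum. The genuinely sufficient condition is therefore per link, e.g. $C_{12}\ge a_1-b_1$ evaluated at $\lambda^*$ when $a_1>b_1$, and a condition on $C_{12}+C_{21}$ alone cannot suffice: in the first row of Table \ref{upper_cij} ($\lambda_1^*=1$), take $C_{12}=0$ and $C_{21}$ arbitrarily large; the sum exceeds $\min\{\mathcal{C}(\gamma_1),\mathcal{C}(\widetilde{\gamma}_2)\}$, yet the only useful route (source $\to$ relay 1 $\to$ relay 2 $\to$ destination) is severed and $R_{\text{DF}}$ stays strictly below the bound. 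Note that the paper's own appendix only argues the ``only if'' direction (necessity of the sum condition for the last two terms of the upper bound to be inactive) and never verifies sufficiency. So you should not expect to prove that the required single-link rate ``equals the tabulated value of $C_{12}+C_{21}$'' as a sum condition; the correct sufficient statement is the per-link one your min-cut computation already delivers, with the relevant link identified by the sign of the cut imbalance, consistent with the one-side-conferencing structure inherited from Proposition \ref{DF_one_side}.
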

\begin{proof}
See Appendix \ref{DF_schemeII_finiterate}.
\end{proof}

\begin{Remark}
Compared to the asymptotic capacity result for conferencing strategy I given in Proposition \ref{DF_schemeI_asym}, Proposition \ref{DF_schemeII_finite_prop} guarantees that conferencing strategy II is practically feasible, and only finite conferencing link rates are necessary to achieve the capacity upper bound.
\end{Remark}

\begin{Proposition} \label{DF_schemeII_equal_I}
\emph{If the conferencing link rates are symmetric, i.e., $C_{12} = C_{21}$, the DF rate under conferencing strategy II is the same as the capacity upper bound under conferencing strategy I with arbitrary channel coefficients.}
\end{Proposition}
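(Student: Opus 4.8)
The plan is to set $C \doteq C_{12}=C_{21}$ and reduce both sides of the claimed identity to the same max--min, which I then match through a network-flow argument. First I would substitute $C_{12}=C_{21}=C$ into the strategy-I bound (\ref{upper_bound_number}): its third and fourth cut-set terms collapse to $C+\lambda_2(\mathcal{C}(\gamma_2)+\mathcal{C}(\widetilde{\gamma}_1))$ and $C+\lambda_1(\mathcal{C}(\gamma_1)+\mathcal{C}(\widetilde{\gamma}_2))$, since $\lambda_1C+\lambda_2C=C$. Writing $T_1=\lambda_1\mathcal{C}(\gamma_1)+\lambda_2\mathcal{C}(\gamma_2)$, $T_2=\lambda_2\mathcal{C}(\widetilde{\gamma}_1)+\lambda_1\mathcal{C}(\widetilde{\gamma}_2)$, $T_3=C+\lambda_2(\mathcal{C}(\gamma_2)+\mathcal{C}(\widetilde{\gamma}_1))$, $T_4=C+\lambda_1(\mathcal{C}(\gamma_1)+\mathcal{C}(\widetilde{\gamma}_2))$, the bound becomes
\begin{align}
C_{\text{upper}} = \max_{\lambda_1+\lambda_2=1}\min\{T_1,T_2,T_3,T_4\}.
\end{align}
It then suffices to show that for every fixed $(\lambda_1,\lambda_2)$ the optimal value of the inner LP of Problem (P3) equals $\min\{T_1,T_2,T_3,T_4\}$; maximizing over $\lambda$ then gives the proposition.

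The converse ($\le$) is transparent by summing constraints of Problem (P3). Adding the two inequalities in (\ref{DF_rate_II_constr1}) gives $\sum R_{ij}\le T_1$, and adding those in (\ref{DF_rate_II_constr2}) gives $\sum R_{ij}\le T_2$. Adding $R_{12}\le C$ to the $i=2$ instances $R_{22}+R_{21}\le\lambda_2\mathcal{C}(\gamma_2)$ and $R_{11}+R_{21}\le\lambda_2\mathcal{C}(\widetilde{\gamma}_1)$ yields $\sum R_{ij}+R_{21}\le T_3$, whence $\sum R_{ij}\le T_3$ since $R_{21}\ge0$; symmetrically $\sum R_{ij}\le T_4$ using $R_{21}\le C$ and $R_{12}\ge0$. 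Hence the LP value is at most $\min\{T_1,T_2,T_3,T_4\}$ pointwise in $\lambda$.

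For achievability ($\ge$), and to obtain equality at one stroke, I would recognize the inner LP as a maximum-flow problem. With $A_1=\lambda_1\mathcal{C}(\gamma_1)$, $A_2=\lambda_2\mathcal{C}(\gamma_2)$, $B_1=\lambda_2\mathcal{C}(\widetilde{\gamma}_1)$, $B_2=\lambda_1\mathcal{C}(\widetilde{\gamma}_2)$, build a network with two relay-intake nodes $a_1,a_2$ fed by source edges of capacity $A_1,A_2$, two relay-output nodes $b_1,b_2$ draining to the sink via edges of capacity $B_1,B_2$, infinite-capacity ``own-message'' edges $a_1\to b_1$ and $a_2\to b_2$, and two conferencing edges $a_1\to b_2$ and $a_2\to b_1$ each of capacity $C$. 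Then $R_{11},R_{12},R_{21},R_{22}$ are exactly the flows along the four source-to-sink paths, the constraints of (P3) are precisely the edge capacities, and the objective is the total flow, so the LP value is the max flow and hence, by max-flow/min-cut, the minimum cut.

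The main obstacle is the cut enumeration. Because the own-message edges are uncuttable, any finite cut keeps $b_i$ on the source side whenever $a_i$ is, leaving only four admissible cuts: the two ``balanced'' ones reproduce $T_1$ and $T_2$ exactly, while the two ``mixed'' ones evaluate to $A_2+B_1+\min(C,B_2)$ and $A_1+B_2+\min(C,B_1)$ rather than $T_3,T_4$. The delicate step is to verify this never lowers the minimum: when $\min(C,B_2)=B_2<C$ the mixed cut equals $A_2+T_2\ge T_2$, and in that same regime $T_3=C+A_2+B_1>T_2$, so the true cut and $T_3$ drop out of their respective minima together; the symmetric argument handles $T_4$. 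Carrying out this short domination check over the capacity orderings gives $\min\text{cut}=\min\{T_1,T_2,T_3,T_4\}$ for every $\lambda$, and taking $\max_{\lambda}$ establishes that $R_{\text{DF}}$ of (P3) equals the strategy-I bound $C_{\text{upper}}$ of (\ref{upper_bound_number}). (Equivalently, one may bypass the flow picture via LP duality, exhibiting the four dual-feasible vertices with objectives $T_1,\dots,T_4$ and checking no other vertex beats their minimum; the flow argument merely makes this bookkeeping transparent.)
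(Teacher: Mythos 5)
Your proof is correct, and it reaches the conclusion by a genuinely different route than the paper. The paper argues by cases on which of the four terms in the $\min$ of (\ref{upper_bound_number}) is active at the optimal $\lambda^*$: for the first two terms it falls back on the no-conferencing result of \cite{xue} (setting $R_{i,3-i}=0$), and for the third and fourth terms it exhibits an explicit corner point (e.g.\ $R_{12}=C_{12}$, $R_{21}=0$, $R_{11}=\lambda_2\mathcal{C}(\widetilde{\gamma}_1)$, $R_{22}=\lambda_2\mathcal{C}(\gamma_2)$) and verifies feasibility of the remaining constraints using the assumption that the corresponding term is smallest; the converse is the same constraint-summing argument you give, inherited from Remark \ref{DF_schemeI_no_capa}. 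You instead observe that for fixed $\lambda$ the inner LP of (P3) is exactly a four-path max-flow problem and invoke max-flow/min-cut, which delivers achievability and converse simultaneously and yields the stronger pointwise statement that the LP value equals $\min\{T_1,T_2,T_3,T_4\}$ for \emph{every} $\lambda$, not only at the maximizer. The price is the cut enumeration: your phrase ``only four admissible cuts'' is slightly loose (cuts with some $b_i$ on the source side but $a_i$ on the sink side, e.g.\ $\{b_1\}$, are also finite, though trivially dominated by $T_1$), but your $\min(C,B_i)$ bookkeeping and the domination check correctly dispose of the nontrivial ones, so the gap is cosmetic. Your approach buys a cleaner, case-free argument that also makes transparent why the symmetric-rate assumption $C_{12}=C_{21}$ matters (it is what makes $\lambda_1 C_{21}+\lambda_2 C_{12}$ collapse to the single cut capacity $C$); the paper's approach buys explicit achieving rate allocations, which Table \ref{which_conf_link}-style operational conclusions rely on.
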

\begin{proof}
First, for the cases that the either one of the first two terms in the min operation of (\ref{upper_bound_number}) is the smallest among these four terms, it is easy to check that it can be achieved by setting $R_{i,3-i}=0$, $i=1,2$, in Problem (P3). On the other hand, if the third term in (\ref{upper_bound_number}) is the smallest one, it is achievable for the DF scheme due to the following fact: As a similar argument of Remark \ref{DF_schemeI_no_capa}, it can be shown that
\begin{align}
R_{\text{DF}} & \leq  R_{11} +R_{12} + 2R_{21} + R_{22} \label{scheme_II_DF_achieving1}\\
 & \leq C_{12} + \lambda_2 \left( \mathcal{C} (\gamma_2) + \mathcal{C} (\widetilde{\gamma}_1)  \right), \label{scheme_II_DF_achieving2}
\end{align}
where (\ref{scheme_II_DF_achieving2}) equals the third term in (\ref{upper_bound_number}), the equality in (\ref{scheme_II_DF_achieving1}) is achieved only when $R_{21} =0$, and the equality in (\ref{scheme_II_DF_achieving2}) is achieved only when the constraints (\ref{DF_rate_II_conference_constr})-(\ref{DF_rate_II_constr2}) achieves the equality for the case of $i=2$. Thus, define $R_{12}=C_{12}$, $R_{21} =0$, $R_{11} = \lambda_2 \mathcal{C} (\widetilde{\gamma}_1) $, and $R_{22} = \lambda_2 \mathcal{C} (\gamma_2) $. Since we assume that the third term in (\ref{upper_bound_number}) is the smallest one, it is easy to check that this solution is also feasible for the constraints in (\ref{DF_rate_II_constr1}) and (\ref{DF_rate_II_constr2}) for the case of $i=1$. As such, the third term in (\ref{upper_bound_number}) is achievable for the DF scheme. For the fourth term in (\ref{upper_bound_number}), it can be shown that it is achievable by applying a similar argument as in the previous case. Therefore, this proposition is proved.
\end{proof}

For the case that the conferencing link rates are not the same, i.e., $C_{12} \neq C_{21}$, the DF rate under conferencing strategy II may be either larger or smaller than the capacity upper bound under conferencing strategy I, which will be shown in the next section by numerical results.

\section{Numerical Results}

In this section, we present some numerical results to compare the performances of the proposed coding schemes. Here, we only consider the asymmetric channel case, i.e., $\gamma_1 = \widetilde{\gamma}_2$ and $\gamma_2 = \widetilde{\gamma}_1$, and also show the performance of the DF simultaneous relaying scheme given in \cite{chuan2} as a comparison, which usually performs the best among various coding schemes under the simultaneous relaying mode.

In Fig. \ref{asym_diff_SNR}, for the two relay conferencing strategies, we plot the capacity upper bounds and various achievable rates as functions of link gains. Here, we let $C_{12}=C_{21} =5$ bits/s/Hz, $\gamma_1 = \widetilde{\gamma}_2 =10$ dB, and $\gamma_2 = \widetilde{\gamma}_1$ change over $[-10,30]$ dB. It is observed that the two upper bounds coincide when the channel gain is relatively small, i.e., when below 15 dB. For the DF relaying scheme, it achieves the capacity upper bound when the channel gain is less than 10 dB under conferencing strategy II, while only at the point of 10 dB under conferencing strategy I.

In Fig. \ref{conference_rate}, we plot the capacity upper bounds and various achievable rates as functions of the conferencing link rates for both of the two conferencing strategies. Here, we assume $C_{12} = C_{21}$, $\gamma_1 = \widetilde{\gamma}_2 = 10$ dB, and $\gamma_2 = \widetilde{\gamma}_1 =30$ dB. It is observed that relay conferencing can significantly increase these achievable rates for both of the simultaneous and alternative relaying schemes. Moreover, although it is proved in Proposition \ref{DF_schemeI_asym} that under conferencing strategy I, the alternative DF scheme can asymptotically achieve the capacity upper bound as $C_{12}$ goes to infinity, unfortunately it approaches the upper bound very slowly: Even when $C_{12}$ are 50 bits/s/Hz, the gap between them is still about 1 bits/s/Hz; while under conferencing strategy II, the DF scheme achieves the corresponding upper bound with relative small conferencing link rates, i.e., about 12 bits/s/Hz.

\section{Conclusion}
In this paper, we considered the alternative relaying diamond relay channel with conferencing links. We derived the DF and AF achievable rates for two conferencing strategies, and showed that these rate maximization problem are convex. For the DF relaying scheme, by further exploiting the properties of the optimal solution, one-side conferencing was shown to be optimal for the DF scheme with both of the two conferencing strategies. Then, we obtained the DF rate in closed-form, and explicitly showed the rules on which conferencing link should be used under given channel conditions for one-side conferencing. Interestingly, the DF scheme was shown to be capacity-achieving with the help of finite conferencing link rates under conferencing strategy II, whose lower bounds were also derived. For the AF relaying scheme, we studied the optimal combining strategy, and showed that one-side conferencing is not optimal in general. Furthermore, some asymptotic optimal combining strategies were obtained in both the high and low SNR regimes.

% For peer review papers, you can put extra information on the cover
% page as needed:
% \ifCLASSOPTIONpeerreview
% \begin{center} \bfseries EDICS Category: 3-BBND \end{center}
% \fi
%
% For peerreview papers, this IEEEtran command inserts a page break and
% creates the second title. It will be ignored for other modes.
\IEEEpeerreviewmaketitle

\appendices
\section{Achievability Proof of the DF Rate} \label{appen_alternative_DF}

\textbf{Codebook Generation:} First, note that we only need to generate in total two sets of codebooks for the odd and even time slots, respectively; for simplicity, we use the subscript $s$, $s=1,2$, to distinguish these two sets of codebooks, i.e., $s=1$ for that in odd time slots and $s=2$ for that in even ones. The codebooks at the source are generated as follows: Generate $2^{n R_{s,s}}$ i.i.d. sequences $\bm{u}_s(q^{1}_s)$, where $q^{1}_s \in \left[ 1:  2^{nR_{s,s}} \right]$, according to the distribution $\prod_{j=1}^{\lambda_s n} p(u_{s,j})$; for each $\bm{u}_s( q^{1}_s )$, generate $2^{n R_{s,3-s}}$ i.i.d. sequences $\bm{x}_s (q^{1}_s,q^{2}_s)$, where $q^{2}_s \in \left[ 1: 2^{n R_{s,3-s}}\right]$, with the distribution $\prod_{j=1}^{\lambda_s n} p(x_{s,j} | u_{s,j})$. For the conferencing links, generate $2^{n R_{3-s,s}}$ i.i.d. sequences $\mathcal{M}_{3-s}(v_{3-s})$, where $v_{3-s} \in \left[1:R_{3-s,s} \right]$. For the relay-destination transmissions, generate $2^{n \left(R_{3-s,3-s} +  R_{s,3-s} \right) }$ i.i.d. sequences $\bm{t}_{3-s} (s_{s}^1,s_s^2)$ by a similar superposition coding method as that of $\bm{x}_s$, where $s_{s}^1 \in \left[ 1:R_{3-s,3-s}\right]$ and $s_{s}^2 \in \left[ 1: R_{s,3-s}\right]$, according to the distribution $\prod_{j=1}^{\lambda_s n} p(t_j)$.

\textbf{Encoding and decoding:} At the beginning of the $k$-th time slot, $k=1,2,\cdots$, where the source sends message to the $i$-th relay with $i=1$ for odd $k$ and $i=2$ for even $k$, the source splits the message $w_k$ into two submessages $w^{1}_k$ and $w^{2}_k$, and transmits $\bm{x}_i (w^{1}_k, w^{2}_k)$; the $(3-i)$-th relay transmits $\mathcal{M}_i (w^{3-i}_{k-1})$ to the $i$-th relay via the conferencing link; the $(3-i)$-th relay transmits $\bm{t}_i (w^{3-i}_{k-1},w^{i}_{k-2})$ to the destination.

At the end of the $k$-th time slot, the $i$-th relay obtains $\mathcal{M}_i (w^{3-i}_{k-1})$ from the $(3-i)$-th relay. Since we assume that the conferencing links are noiseless, the $i$-th relay can successfully decode message $w^{3-i}_{k-1}$ if
\begin{align}
R_{3-i,i} \leq \lambda_i C_{3-i,i}.
\end{align}
Simultaneously, the $i$-th relay obtains $\bm{y}_i$ from the source. Then, it decodes $(w^{1}_k, w^{2}_k)$, and this can be done reliably if
\begin{align}
R_{i,i} + R_{i,3-i}  \leq \lambda_i I \left( X_i ; Y_i  \right)  =  \lambda_i \mathcal{C} \left( \gamma_i \right), \label{DF_first_hop}
\end{align}
where (\ref{DF_first_hop}) is obtained by choosing $X_i$ as a Gaussian random variable with a distribution $\mathcal{CN} \left( 0, P_S \right)$. At the destination, it decodes $(w^{3-i}_{k-1},w^{i}_{k-2})$, and this can be done reliably if
\begin{align}
R_{3-i,3-i} +  R_{i,3-i} \leq \lambda_i I \left( T_{3-i} ; Z_i \right)  = \lambda_i \mathcal{C} \left( \widetilde{\gamma}_{3-i} \right), \label{DF_second_hop}
\end{align}
where (\ref{DF_second_hop}) is obtained by choosing $T_{3-i}$ as a Gaussian random variable with a distribution $\mathcal{CN} \left( 0, P_R \right)$.
Based on the above analysis, we can obtain the DF achievable rate as shown in (\ref{DF_rate}).

\section{Proof of Proposition \ref{af_convex_prop}} \label{convex_af}
To prove this result, we only need to show that $R_i$ defined in (\ref{AF_rate_original}) is concave over $|a_{i,j}|^2$'s \cite{boyd}, due to the convexity of the constraints in (\ref{AF_power_constraints}). Then, since the function $y=\mathcal{C}(x)$ is concave and non-decreasing, we need to prove that the function within the $\mathcal{C}(\cdot)$ function in (\ref{AF_rate_original}) is concave \cite{boyd}. Moreover, noticing that $\frac{ |a_{ii}|^2 \gamma_i \widetilde{\gamma}_i}{1  +  |a_{ii}|^2 \widetilde{\gamma}_i}$ is concave over $|a_{ii}|^2$, we only need to show that the second fraction in $\mathcal{C}(\cdot)$ is also concave. By letting $x=|a_{i,3-i}|^2$ and $y=|a_{3-i,3-i}|^2$, and normalizing the coefficients of $x$ in the numerator and the denominator both to 1, it is equivalent to prove that $z =  \frac{ x }{ x + a y + b }$, where $a$ and $b$ are some positive constants, is concave. Then, check the Hessian matrix of function $z$ as
\begin{align}
\bm{H} = \frac{1}{(x+ay+b)^3} \left[ \begin{array}{cc}
         -2ay-2b & ax-a^2y-ab \\
         ax-a^2y-ab & 2a^2x
       \end{array}
 \right].
\end{align}
Noticing that $a>0,~b>0,~x\geq 0,$ and $y\geq 0$, it is easy to show that $-2ay-2b <0$ and $|\bm{H}| = \frac{1}{(x+ay+b)^3} \left[ - 2a^2x ( 2ay+ 2b) - (ax-a^2y-ab)^2 \right] < 0$, which implies that $\bm{H}$ is negative semidefinite and function $z$ is concave. Therefore, the proposition is proved.

\section{Proof of Proposition \ref{DF_schemeII_finite_prop}} \label{DF_schemeII_finiterate}
Similar to the proof of Proposition \ref{DF_schemeII_equal_I}, it is easy to check that the first and the second terms in (\ref{upper_bound_number_delayII}) can be achieved by the DF rate given in Problem (P3). However, for the third and the fourth terms in (\ref{upper_bound_number_delayII}), by a similar argument as Remark \ref{DF_schemeI_no_capa}, it can be shown that these two terms cannot be achieved by the DF relaying scheme. As such, the DF relaying scheme achieves the capacity upper bound only for the case that the last two terms are redundant, i.e., for the optimal $\lambda_i^*$ achieving the maximum value of the following optimization problem,
\begin{align} \label{upper_schemeII_sim}
\widetilde{C}_{\text{upper}} &= \max_{\lambda_1 + \lambda_2 =1 } \min \left\{  \lambda_1 \mathcal{C} (\gamma_1) +  \lambda_2 \mathcal{C} (\gamma_2), \lambda_2 \mathcal{C} (\widetilde{\gamma}_1) +  \lambda_1 \mathcal{C} (\widetilde{\gamma}_2) \right\},
\end{align}
we always have $ \lambda_2^* \left( \mathcal{C} (\gamma_2) + \mathcal{C} (\widetilde{\gamma}_1)\right) + C_{12} + C_{21} \geq \widetilde{C}_{\text{upper}}$ and $ \lambda_1^*  \left( \mathcal{C} (\gamma_1) + \mathcal{C} (\widetilde{\gamma}_2)\right) + C_{12} + C_{21} \geq \widetilde{C}_{\text{upper}}$. Therefore, the capacity upper bound is achieved by the DF scheme only when the following relationship is satisfied
\begin{align}
C_{12} + C_{21} \geq \widetilde{C}_{\text{upper}} - \min \left\{ \lambda_2^* \left( \mathcal{C} (\gamma_2) + \mathcal{C} (\widetilde{\gamma}_1)\right), \lambda_1^*  \left( \mathcal{C} (\gamma_1) + \mathcal{C} (\widetilde{\gamma}_2)\right) \right\}.
\end{align}
Denote
\begin{align} \label{g_func}
 g(\lambda_1^*) = \min \left\{ \lambda_2^* \left( \mathcal{C} (\gamma_2) + \mathcal{C} (\widetilde{\gamma}_1)\right), \lambda_1^*  \left( \mathcal{C} (\gamma_1) + \mathcal{C} (\widetilde{\gamma}_2)\right) \right\},
\end{align}
and it follows that $g(0)=g(1) = 0$. Then, in order to compute the lower bound on $C_{12} + C_{21}$ to achieve the capacity upper bound, we only need to compute $\widetilde{C}_{\text{upper}}$ and the corresponding $\lambda_i^*$. For Problem (\ref{upper_schemeII_sim}), it follows that
\begin{enumerate}
  \item $\gamma_1 > \gamma_2, \widetilde{\gamma}_2 > \widetilde{\gamma}_1$: It is obtained that $\lambda^*_1=1$, and thus, $\widetilde{C}_{\text{upper}} = \min \left\{ \mathcal{C}(\gamma_1), \mathcal{C} (\widetilde{\gamma}_2)  \right\}$;
  \item $\gamma_1 \leq \gamma_2, \widetilde{\gamma}_2 \leq \widetilde{\gamma}_1$: It is obtained that $\lambda^*_1 =0$, and thus $\widetilde{C}_{\text{upper}} = \min \left\{ \mathcal{C}(\gamma_2) , \mathcal{C}(\widetilde{\gamma}_1)  \right\}$;
  \item $\gamma_1 > \gamma_2 ,  \widetilde{\gamma}_1 \leq \widetilde{\gamma}_2 , \widetilde{\gamma}_2 \geq \gamma_1$: It is obtained that $\lambda^*_1 = 1$, and thus $\widetilde{C}_{\text{upper}} = \mathcal{C}(\widetilde{\gamma}_2)$;
  \item $\gamma_1 \leq  \gamma_2 ,  \widetilde{\gamma}_1 > \widetilde{\gamma}_2 , \widetilde{\gamma}_2 \leq \gamma_1$: It is obtained that $\lambda^*_1 = 1$, and thus $\widetilde{C}_{\text{upper}} = \mathcal{C} \left( \gamma_1 \right)$;
  \item $\gamma_1 >  \gamma_2 ,  \widetilde{\gamma}_1 > \widetilde{\gamma}_2 , \gamma_1 > \widetilde{\gamma}_2$: It is obtained that $\lambda^*_1 = \lambda_0$, and thus $\widetilde{C}_{\text{upper}} = \frac{\mathcal{C} \left( \gamma_2 \right) \mathcal{C} \left( \widetilde{\gamma}_2 \right) - \mathcal{C} \left( \gamma_1 \right) \mathcal{C} \left( \widetilde{\gamma}_1 \right)}{ \mathcal{C} \left( \widetilde{\gamma}_2 \right) -  \mathcal{C} \left( \widetilde{\gamma}_1 \right) - \mathcal{C} \left( \gamma_1 \right)+ \mathcal{C} \left( \gamma_2 \right)}$;
  \item $\gamma_1 \leq  \gamma_2 ,  \widetilde{\gamma}_1 < \widetilde{\gamma}_2 , \gamma_1 <  \widetilde{\gamma}_2 $: It is obtained that $\lambda^*_1 = \lambda_0$, and thus $\widetilde{C}_{\text{upper}} = \frac{\mathcal{C} \left( \gamma_2 \right) \mathcal{C} \left( \widetilde{\gamma}_2 \right) - \mathcal{C} \left( \gamma_1 \right) \mathcal{C} \left( \widetilde{\gamma}_1 \right)}{ \mathcal{C} \left( \widetilde{\gamma}_2 \right) -  \mathcal{C} \left( \widetilde{\gamma}_1 \right) - \mathcal{C} \left( \gamma_1 \right)+ \mathcal{C} \left( \gamma_2 \right)}$;
\end{enumerate}
where $\lambda_0 = \frac{\mathcal{C} \left( \gamma_2 \right) - \mathcal{C} \left( \widetilde{\gamma}_1 \right)}{ \mathcal{C} \left( \widetilde{\gamma}_2 \right) -  \mathcal{C} \left( \widetilde{\gamma}_1 \right) - \mathcal{C} \left( \gamma_1 \right)+ \mathcal{C} \left( \gamma_2 \right)} $.
Thus, the lower bound on $C_{12} + C_{21}$ to achieve the capacity upper bound is given in Table \ref{upper_cij}.

% conference papers do not normally have an appendix

% use section* for acknowledgement

\begin{figure}[h]
\centering
\includegraphics[width=.7\linewidth]{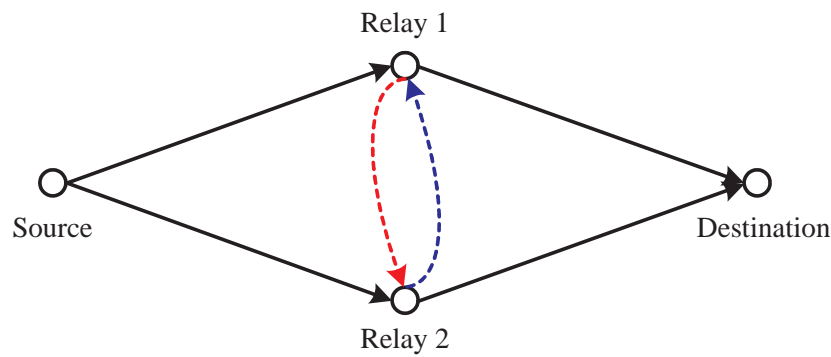}
\caption{The diamond relay channel with out-of-band conferencing links.} \label{system_model}
\end{figure}

\begin{figure}[h]
\centering
\includegraphics[width=.7\linewidth]{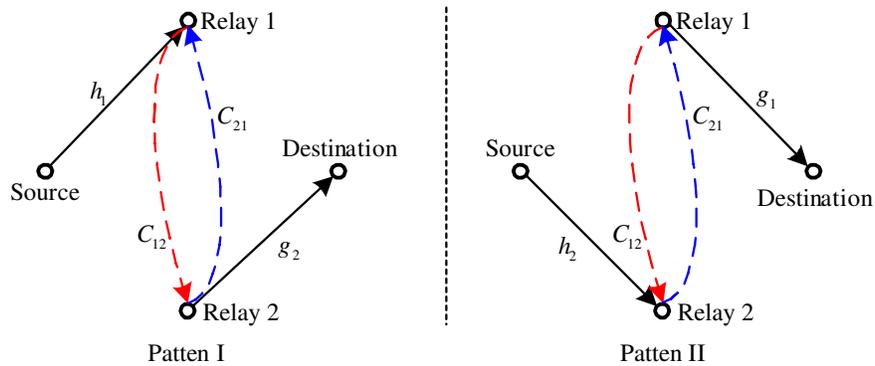}
\caption{Two transmitting-receiving states for the alternative relaying Diamond channel with conferencing links.} \label{alternative_relay}
\end{figure}

\begin{figure}[h]
\centering
\subfigure[Conferencing strategy I.]{
          \label{fig21}
          \includegraphics[width=.7 \linewidth]{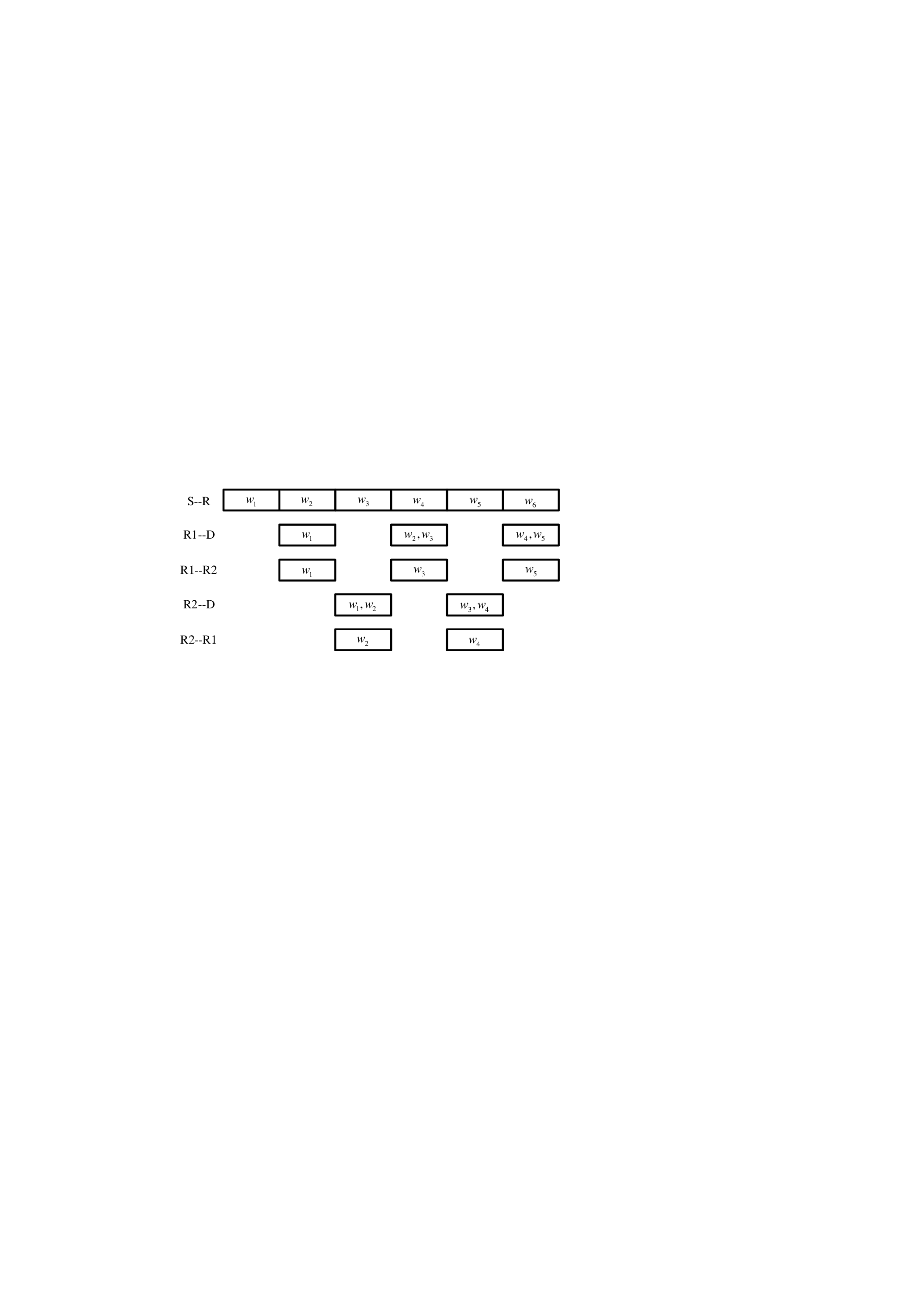}}
\hspace{.45in}
\subfigure[Conferencing strategy II.]{
          \label{fig22}
          \includegraphics[width=.7 \linewidth]{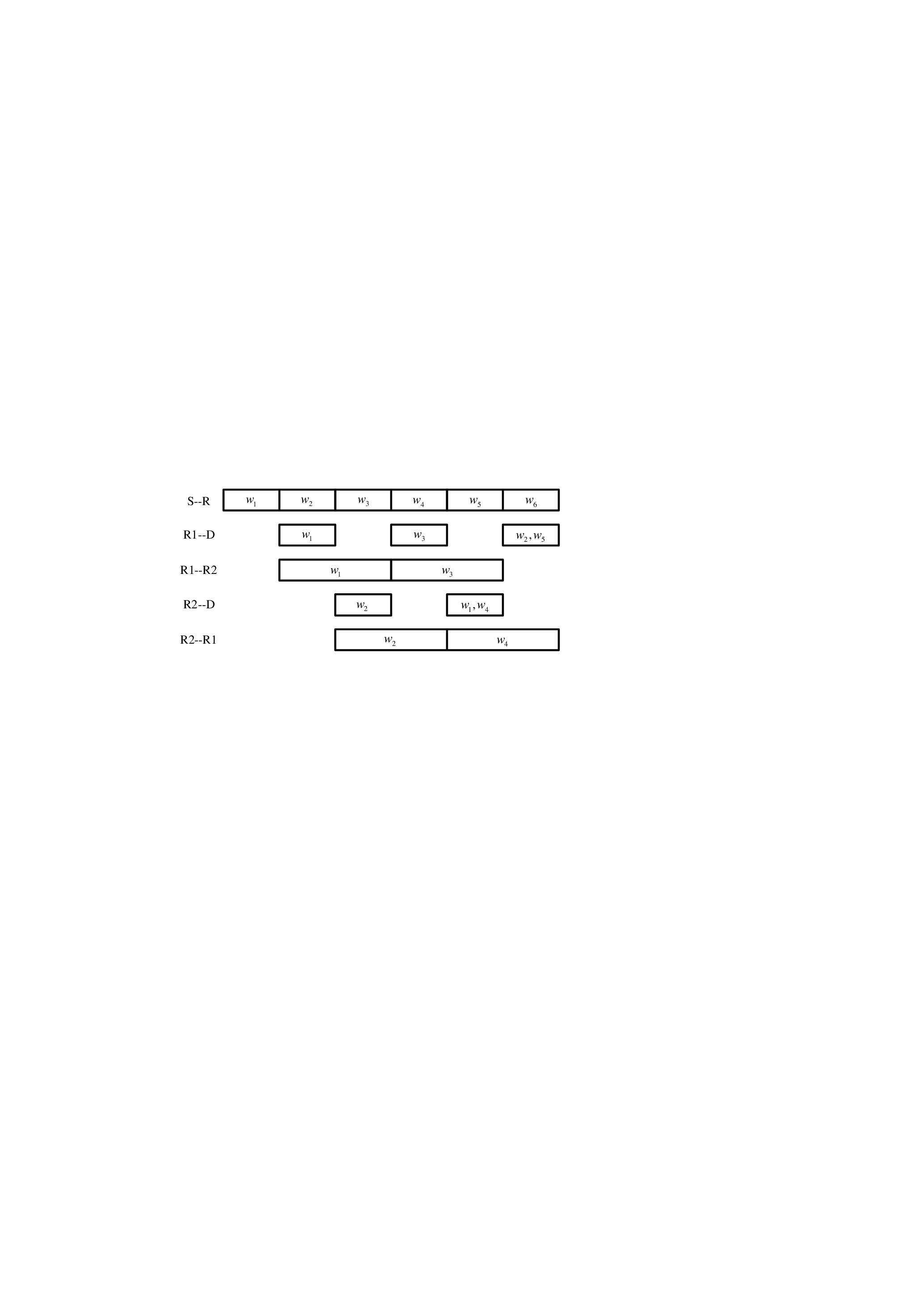}}
\caption{Relay conferencing strategies for the diamond relay channel with conferencing links, with ``S'', ``R1'', ``R2'', and ``D'' denoting the source, relay 1, relay 2, and destination, respectively.}
\label{fig_scheduling}
\end{figure}

\begin{figure}[h]
\centering
\subfigure[Case 1)]{
          \label{four_cases1}
          \includegraphics[width=.4 \linewidth]{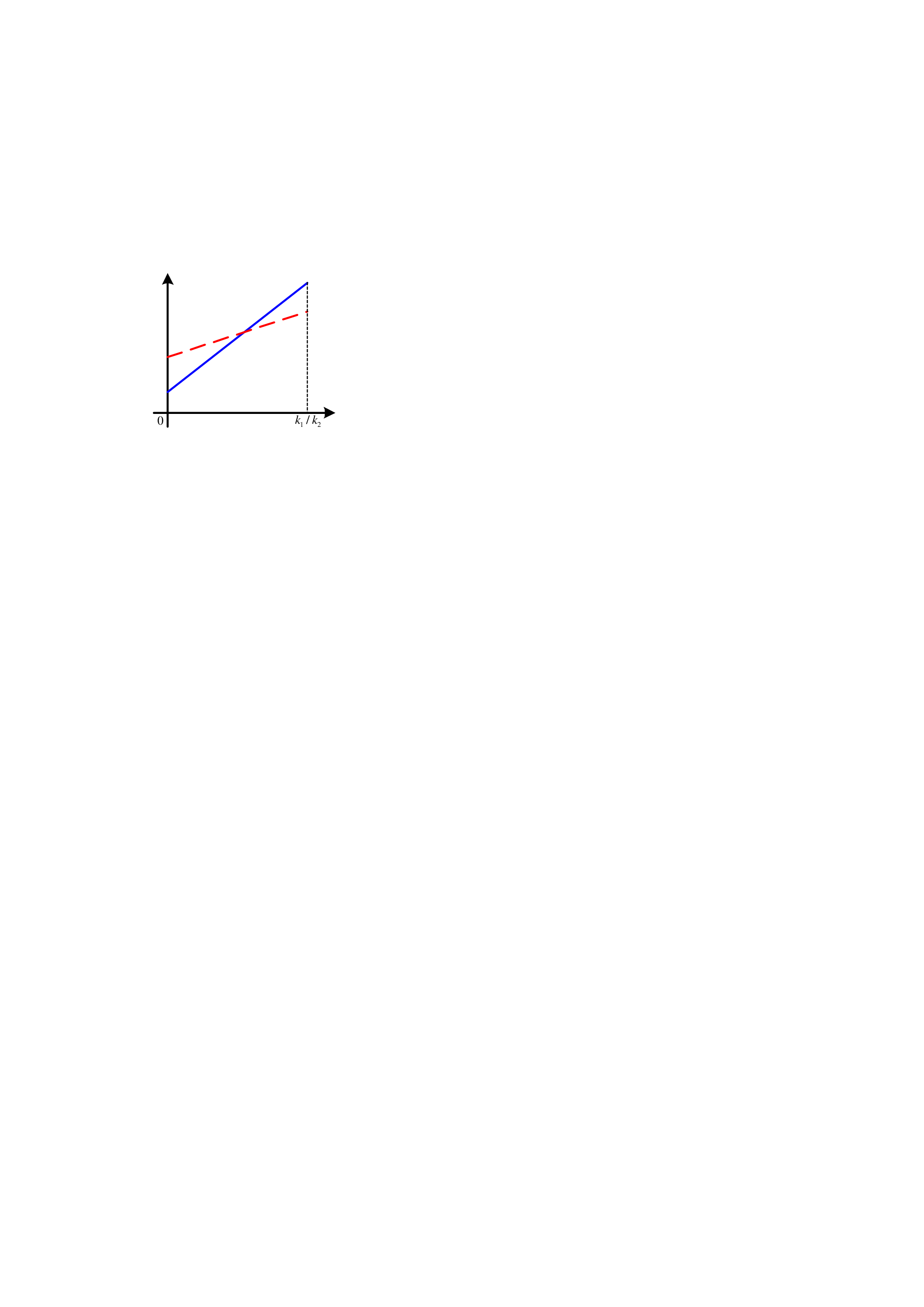}}
\hspace{.45in}
\subfigure[Case 2)]{
          \label{four_cases2}
          \includegraphics[width=.4 \linewidth]{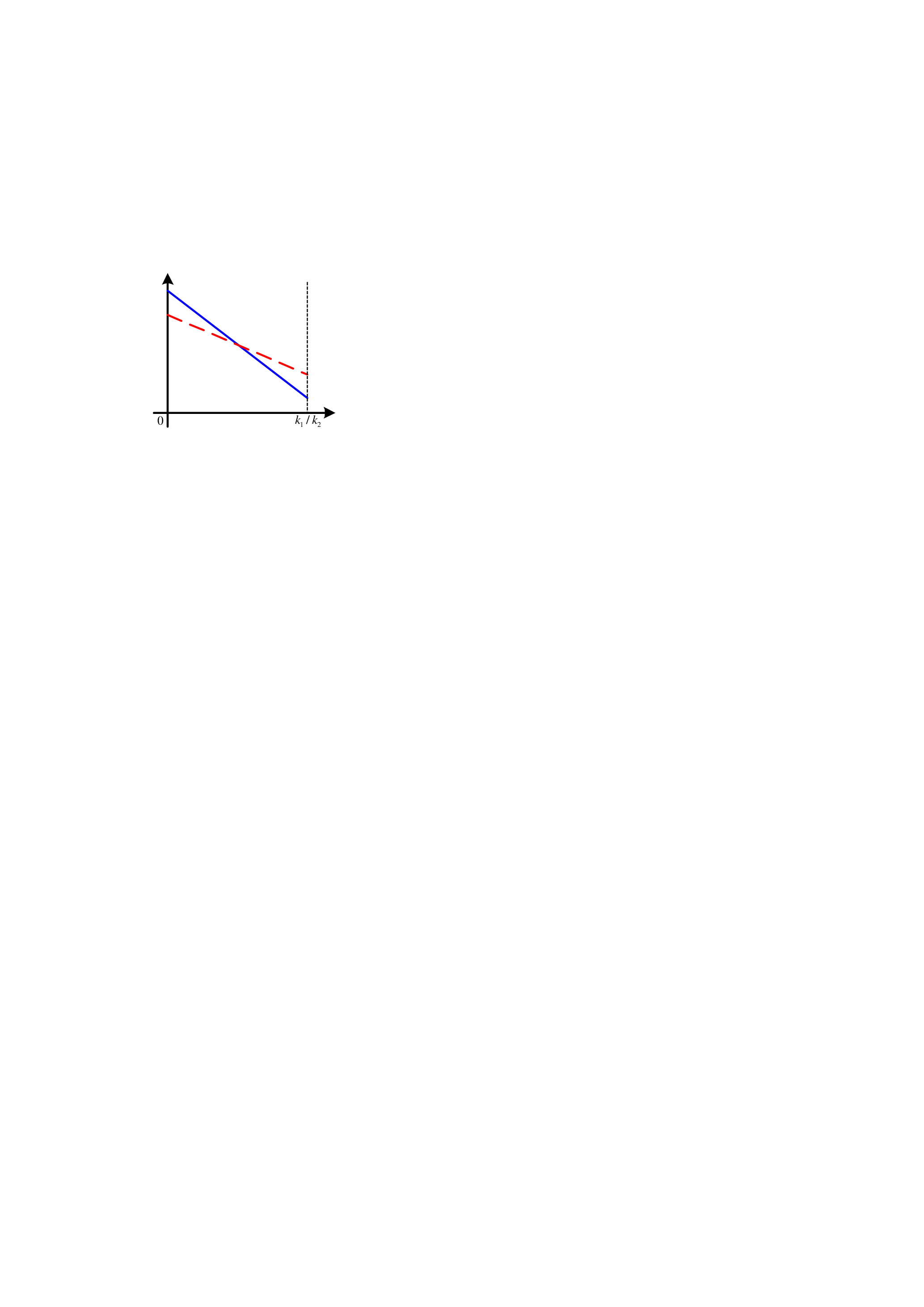}}
\hspace{.45in}
\subfigure[Case 3)]{
          \label{four_cases3}
          \includegraphics[width=.4 \linewidth]{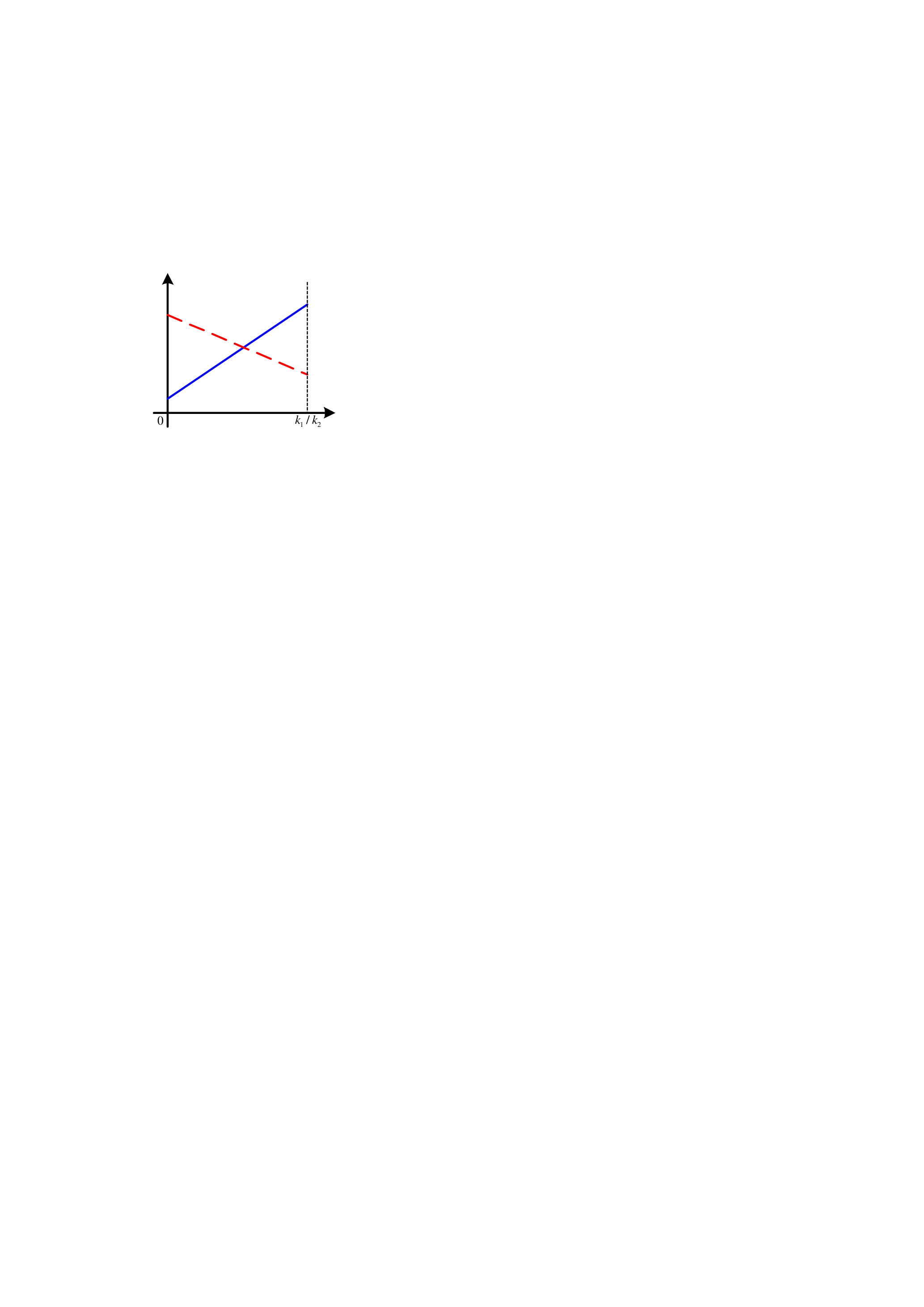}}
\hspace{.45in}
\subfigure[Case 4)]{
          \label{four_cases4}
          \includegraphics[width=.4 \linewidth]{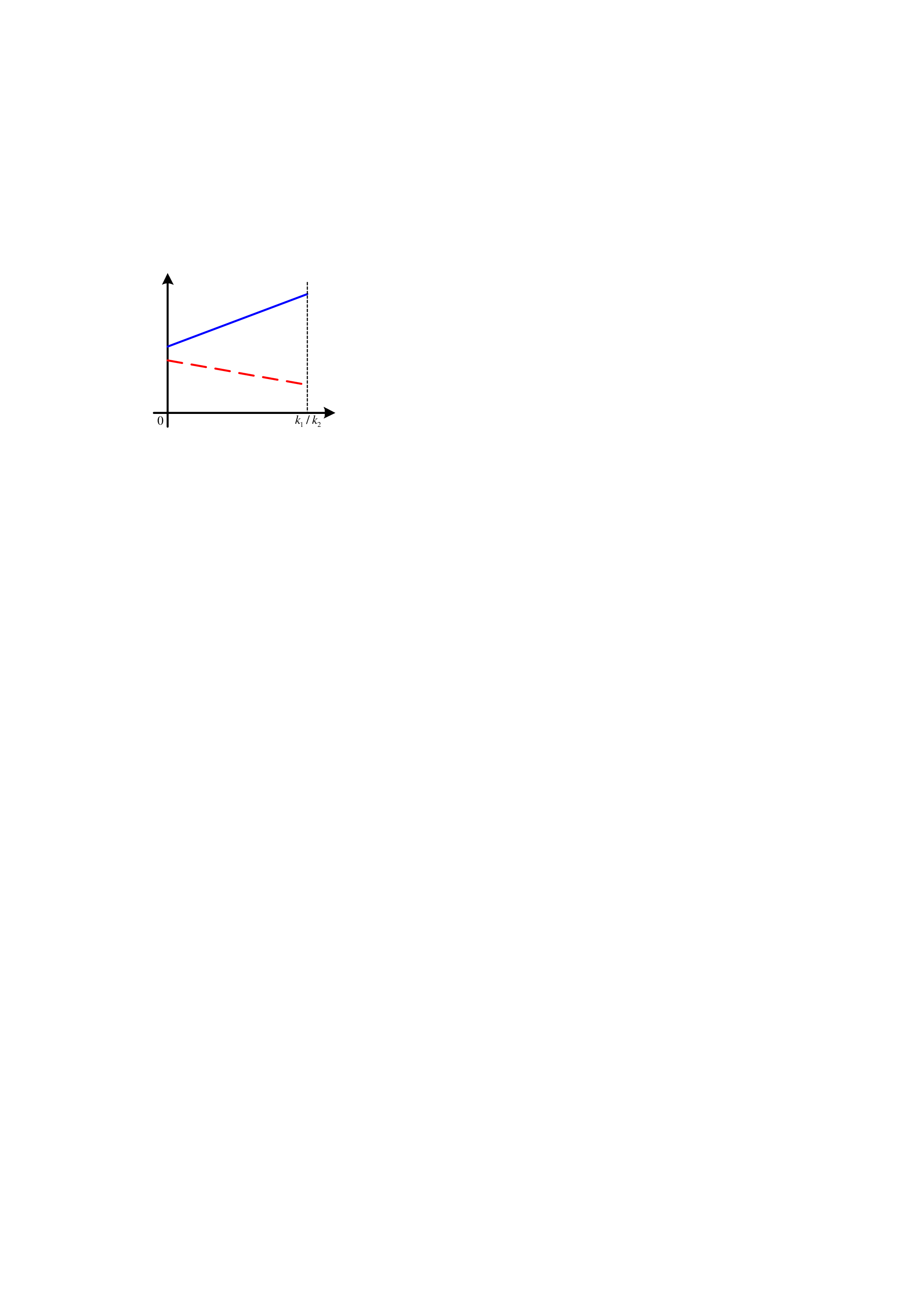}}
\caption{Four possible cases for the max/min solutions in (\ref{DF_casei}) and (\ref{DF_caseii}), where $k_1$ and $k_2$ are given in (\ref{constraint_r12}) and (\ref{constraint_r21}), respectively.}
\label{four_cases}
\end{figure}

\begin{table}
\centering
\begin{threeparttable}[t]
\caption{The DF rate under the conferencing strategy I and the corresponding one-side conferencing scheme.}
\begin{tabular}{ c || c |c }
    \hline \hline
    Channel conditions &  DF rate & Conferencing scheme \\ \hline \hline
    $\gamma_1 > \gamma_2$ and $\widetilde{\gamma}_2 > \widetilde{\gamma}_1$ & (\ref{DF_rate_case11}) & \\
     $\gamma_1 > \gamma_2$, $\widetilde{\gamma}_2 \leq \widetilde{\gamma}_1$ and $\mathcal{C} (\gamma_1) \mathcal{C} (\gamma_2) < \mathcal{C} (\widetilde{\gamma}_1) \mathcal{C} (\widetilde{\gamma}_2)$ & (\ref{DF_inter}) &  Relay 1 $\rightarrow$ Relay 2\tnote{1}  \\
     $\gamma_1 \leq \gamma_2$, $\widetilde{\gamma}_2 > \widetilde{\gamma}_1$ and $\mathcal{C} (\gamma_1) \mathcal{C} (\gamma_2) > \mathcal{C} (\widetilde{\gamma}_1) \mathcal{C} (\widetilde{\gamma}_2)$  & (\ref{DF_inter}) &  \\ \hline
     $\gamma_1 < \gamma_2$ and $\widetilde{\gamma}_2 < \widetilde{\gamma}_1$ &  (\ref{DF_rate_case22}) & \\
 $\gamma_1 > \gamma_2$, $\widetilde{\gamma}_2 \leq \widetilde{\gamma}_1$ and $\mathcal{C} (\gamma_1) \mathcal{C} (\gamma_2) > \mathcal{C} (\widetilde{\gamma}_1) \mathcal{C} (\widetilde{\gamma}_2)$ & (\ref{DF_rate_case21}) & Relay 2 $\rightarrow$ Relay 1 \\
     $\gamma_1 \leq \gamma_2$, $\widetilde{\gamma}_2 > \widetilde{\gamma}_1$ and $\mathcal{C} (\gamma_1) \mathcal{C} (\gamma_2) < \mathcal{C} (\widetilde{\gamma}_1) \mathcal{C} (\widetilde{\gamma}_2)$ & (\ref{DF_rate_case21}) & \\ \hline
     $\gamma_1 = \gamma_2$ and $\widetilde{\gamma}_2 = \widetilde{\gamma}_1$ & (\ref{DF_rate_0})  & \\
     $\gamma_1 > \gamma_2$, $\widetilde{\gamma}_2 \leq \widetilde{\gamma}_1$ and $\mathcal{C} (\gamma_1) \mathcal{C} (\gamma_2) = \mathcal{C} (\widetilde{\gamma}_1) \mathcal{C} (\widetilde{\gamma}_2)$& (\ref{DF_rate_0}) &  No relay conferencing  \\
     $\gamma_1 \leq \gamma_2$, $\widetilde{\gamma}_2 > \widetilde{\gamma}_1$ and $\mathcal{C} (\gamma_1) \mathcal{C} (\gamma_2) = \mathcal{C} (\widetilde{\gamma}_1) \mathcal{C} (\widetilde{\gamma}_2)$ & (\ref{DF_rate_0}) & \\
    \hline \hline
\end{tabular}
\label{which_conf_link}
\begin{tablenotes}
\item 1. This means that only the conferencing link from relay 1 to relay 2 is used, and the other one is not.
\end{tablenotes}
\end{threeparttable}
\end{table}

\begin{table}
\centering
\begin{threeparttable}[t]
\caption{Lower bound on $C_{12}+C_{21}$ for the DF scheme to achieve the capacity upper bound under conferencing strategy II.}
\begin{tabular}{ c || c  }
    \hline \hline
    Channel conditions & Minimum $C_{12}+C_{21}$  \\ \hline \hline
    $\gamma_1 > \gamma_2, \widetilde{\gamma}_2 > \widetilde{\gamma}_1$ & $ \min \left\{ \mathcal{C}(\gamma_1), \mathcal{C} (\widetilde{\gamma}_2)  \right\}$ \\ \hline
     $\gamma_1 \leq \gamma_2, \widetilde{\gamma}_2 \leq \widetilde{\gamma}_1$ & $ \min \left\{ \mathcal{C}(\gamma_2) , \mathcal{C}(\widetilde{\gamma}_1)  \right\}$ \\ \hline
     $\gamma_1 > \gamma_2 ,  \widetilde{\gamma}_1 \leq \widetilde{\gamma}_2 , \widetilde{\gamma}_2 \geq \gamma_1$  &  $\mathcal{C}(\widetilde{\gamma}_2)$ \\ \hline
     $\gamma_1 \leq  \gamma_2 ,  \widetilde{\gamma}_1 > \widetilde{\gamma}_2 , \widetilde{\gamma}_2 \leq \gamma_1$  & $\mathcal{C} \left( \gamma_1 \right)$ \\ \hline
$\gamma_1 >  \gamma_2 ,  \widetilde{\gamma}_1 > \widetilde{\gamma}_2 , \gamma_1 > \widetilde{\gamma}_2$  & $\frac{\mathcal{C} \left( \gamma_2 \right) \mathcal{C} \left( \widetilde{\gamma_2} \right) - \mathcal{C} \left( \gamma_1 \right) \mathcal{C} \left( \widetilde{\gamma_1} \right)}{ \mathcal{C} \left( \widetilde{\gamma_2} \right) -  \mathcal{C} \left( \widetilde{\gamma_1} \right) - \mathcal{C} \left( \gamma_1 \right)+ \mathcal{C} \left( \gamma_2 \right)} - g(\lambda_0)$\tnote{1}  \\ \hline
     $\gamma_1 \leq  \gamma_2 ,  \widetilde{\gamma}_1 < \widetilde{\gamma}_2 , \gamma_1 <  \widetilde{\gamma}_2 $  &  $\frac{\mathcal{C} \left( \gamma_2 \right) \mathcal{C} \left( \widetilde{\gamma_2} \right) - \mathcal{C} \left( \gamma_1 \right) \mathcal{C} \left( \widetilde{\gamma_1} \right)}{ \mathcal{C} \left( \widetilde{\gamma_2} \right) -  \mathcal{C} \left( \widetilde{\gamma_1} \right) - \mathcal{C} \left( \gamma_1 \right)+ \mathcal{C} \left( \gamma_2 \right)} - g(\lambda_0)$ \\
    \hline \hline
\end{tabular}
\label{upper_cij}
\begin{tablenotes}
\item 1. $g(\cdot)$ is defined in (\ref{g_func}), and $\lambda_0 = \frac{\mathcal{C} \left( \gamma_2 \right) - \mathcal{C} \left( \widetilde{\gamma}_1 \right)}{ \mathcal{C} \left( \widetilde{\gamma}_2 \right) -  \mathcal{C} \left( \widetilde{\gamma}_1 \right) - \mathcal{C} \left( \gamma_1 \right)+ \mathcal{C} \left( \gamma_2 \right)} $.
\end{tablenotes}
\end{threeparttable}
\end{table}

\begin{figure}[h]
\centering
\includegraphics[width=.6\linewidth]{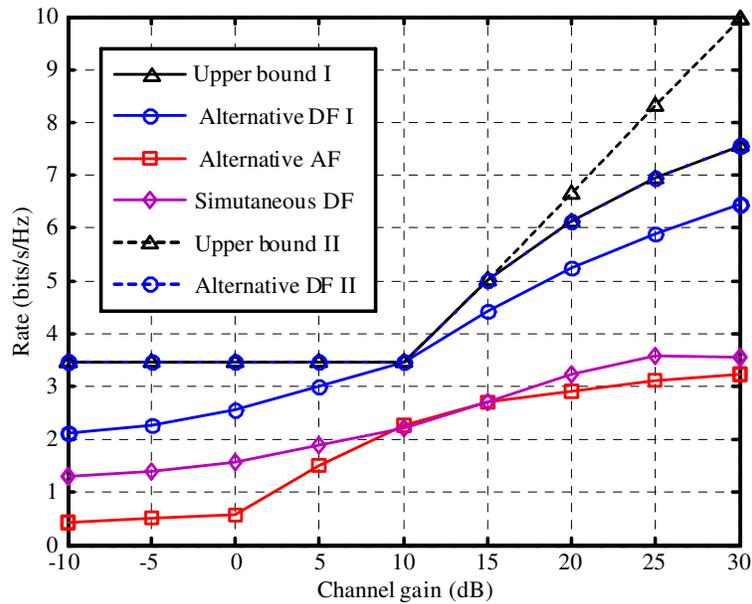}
\caption{Comparison of the capacity upper bounds and various achievable rates under different channel conditions, with $C_{12}=C_{21} =5$ bits/s/Hz, $\gamma_1 = \widetilde{\gamma}_2 = 10$ dB, and different $\gamma_2 = \widetilde{\gamma}_1$.} \label{asym_diff_SNR}
\end{figure}

\begin{figure}[h]
\centering
\includegraphics[width=.6\linewidth]{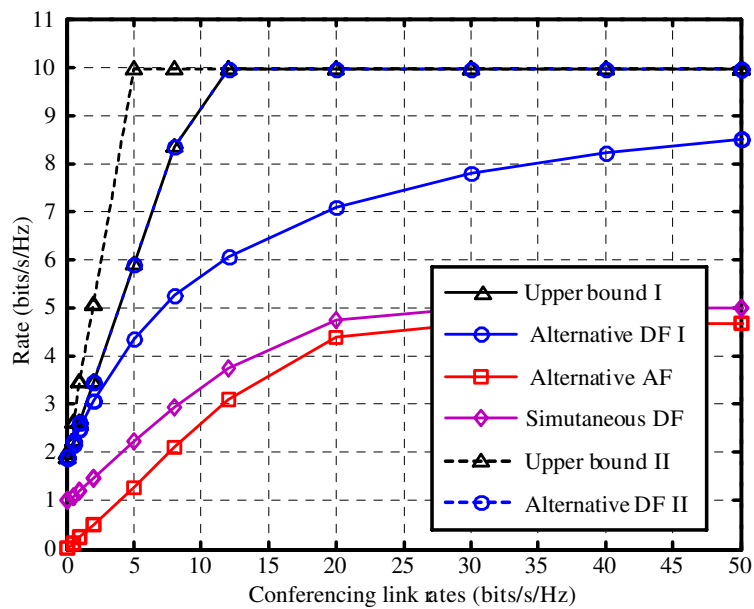}
\caption{Comparison of the capacity upper bounds and various achievable rates over different conferencing link rates, with $C_{12} =C_{21}$, $\gamma_1 = \widetilde{\gamma}_2 = 10$ dB, and $\gamma_2 = \widetilde{\gamma}_1 =30$ dB.} \label{conference_rate}
\end{figure}

%
%
%
%\begin{figure}[h]
%\centering
%     \subfigure[Good BC vs. bad MAC,~$\gamma_i=30~\text{dB}, \widetilde{\gamma}_i=10~\text{dB}$]{
%          \label{simulation5}
%          \includegraphics[width=.7 \linewidth]{HL_asym_conf}}
%     \hspace{.3in}
%     \subfigure[Bad BC vs. good MAC,~$\gamma_i=10~\text{dB}, \widetilde{\gamma}_i=30~\text{dB}$]{
%          \label{simulation6}
%          \includegraphics[width=.7 \linewidth]{LH_asym_conf}}
%     \caption{The achievable rates for different conferencing link rates with $C_{21}=0$.}
%\end{figure}

% trigger a \newpage just before the given reference
% number - used to balance the columns on the last page
% adjust value as needed - may need to be readjusted if
% the document is modified later
%\IEEEtriggeratref{8}
% The "triggered" command can be changed if desired:
%\IEEEtriggercmd{\enlargethispage{-5in}}

% references section

% can use a bibliography generated by BibTeX as a .bbl file
% BibTeX documentation can be easily obtained at:
% http://www.ctan.org/tex-archive/biblio/bibtex/contrib/doc/
% The IEEEtran BibTeX style support page is at:
% http://www.michaelshell.org/tex/ieeetran/bibtex/
%\bibliographystyle{IEEEtran}
% argument is your BibTeX string definitions and bibliography database(s)
%\bibliography{IEEEabrv,../bib/paper}
%
% <OR> manually copy in the resultant .bbl file
% set second argument of \begin to the number of references
% (used to reserve space for the reference number labels box)

% that's all folks
\end{document}